\newcommand\numberthis{\addtocounter{equation}{1}\tag{\theequation}}
\DeclareMathOperator{\R}{\mathbb{R}}
\DeclareMathOperator{\tr}{\mathrm{tr}}
\DeclareMathOperator{\id}{\mathbbm{1}}
\DeclareMathOperator{\im}{\mathbf{i}}
\newcommand\ket[1]{\left| #1 \right\rangle}
\newcommand\bra[1]{\left\langle #1 \right|}
\newcommand\braket[2]{\left\langle #1 |#2\right\rangle}
\newcommand{\Hil}{\mathcal{H}}
\newcommand{\Lin}{\mathcal{L}}
\newcommand{\Den}{\mathcal{D}}
\newcommand{\C}{\mathbb{C}}
\newcommand{\Op}{\mathrm{Op}}
\newcommand{\Ree}{\mathrm{Re}}
\newcommand{\ketbra}[2]{\ket{#1}\!\!\bra{#2}}
\newcommand{\supp}{\mathrm{supp}}
\newcommand{\CPTP}{\mathrm{CPTP}}
\newcommand{\Map}{\mathcal{M}}
\newtheorem{theorem}{Theorem}
\newtheorem{lemma}[theorem]{Lemma}
\newtheorem{proposition}[theorem]{Proposition}
\newtheorem{corollary}[theorem]{Corrolary}
\newtheorem{definition}[theorem]{Defintion}
	\newwrite\bibnotes
	\def\bibnotesext{Notes.bib}
\write\bibnotes{@CONTROL{REVTEX41Control}}
\write\bibnotes{@CONTROL{%
			apsrev41Control,author="08",editor="1",pages="1",title="0",year="1"}}
\write\@auxout{\string\citation{apsrev41Control}}%
\begin{document}
\title{Decomposition Rules for Quantum R\'enyi Mutual Information\\with an Application to Information Exclusion Relations}

\author{Alexander McKinlay}
\author{Marco Tomamichel}
\affiliation{Centre for Quantum Software and Information,
		University of Technology Sydney,
		Ultimo NSW 2007, Australia}
\date{\today}

\begin{abstract}
	We prove decomposition rules for quantum R\'enyi mutual information, generalising the relation $I(A:B) = H(A) - H(A|B)$ to inequalities between R\'enyi mutual information and R\'enyi entropy of different orders. The proof uses Beigi's generalisation of Reisz-Thorin interpolation to operator norms~\cite{Bei}, and a variation of the argument employed by Dupuis~\cite{Dup} which was used to show chain rules for conditional R\'enyi entropies. The resulting decomposition rule is then applied to establish an information exclusion relation for R\'enyi mutual information, generalising the original relation by Hall~\cite{H95}.
\end{abstract}

\maketitle

\section{Introduction}

Mutual information is a fundamental quantity in information theory and can be interpreted as a measure of correlation between two random variables. Most prominently, Shannon~\cite{shannon48} established that the capacity of any discrete memoryless communication channel is given by the maximal mutual information between the channels input and output. Beyond its original use in information theory, it has found many other applications in information processing from such a wide range as machine learning (see, e.g.,~\cite{hirche18, ML1, ML2}) and computational linguistics (see, e.g.,~\cite{CL}). 
Moreover, quantum mutual information has analogous applications in quantum information, for example characterising the capacity of classical to quantum channels~\cite{holevo98,schumacher97,holevo73b} and the quantum channel capacity under entanglement assistance~\cite{Bennett,bennett02,bennett09,bertachristandl11}. It has also found applications in other areas of quantum physics, for example as an entanglement and correlation measure (see, e.g.,~\cite{brandao13d}) and to quantify Heisenberg's uncertainty principle (see~\cite{H95} and~\cite{coles17} for a review on related work). 

Quantum mutual information can be expressed in various equivalent ways, each of which elucidates different properties and interpretations of the quantity. Often quantum mutual information is defined in terms of the von Neumann entropy of its marginals (formal definitions and a discussion of properties of all the quantities mentioned here follow in Section~\ref{sec:relent}), namely we may write
\begin{equation}
	I(A:B) = H(A) + H(B) - H(AB) = H(A) - H(A|B) = H(B) - H(B|A) \,. \label{eq:def-via-decomposition}
\end{equation}
These relations can be interpreted as \emph{decomposition rules}, expressing the mutual information in terms of the\,---\,conceptually simpler\,---\,von Neumann entropies of the different marginals of the joint state of the systems $A$ and $B$. By appealing to the intuition that entropy measures uncertainty in a quantum system, they reveal that mutual information measures the uncertainty in $A$ that is due to the lack of knowledge of $B$, and vice versa. Another important and equivalent definition of mutual information is given in terms of Umegaki's relative entropy~\cite{umegaki62}, namely as the minimal relative entropy between the joint state $\rho_{AB}$ and any product state between the two systems, i.e.\
\begin{align}
	I(A:B)_\rho = \min_{\sigma_A,\sigma_B} D(\rho_{AB}\|\sigma_A\otimes \sigma_B) = \min_{\sigma_B} D(\rho_{AB}\|\rho_A\otimes \sigma_B) = D(\rho_{AB}\|\rho_A\otimes \rho_B) \,, \label{eq:def-via-divergence}
\end{align}
where in the second and third expressions we used that the minimum is taken for the marginals $\rho_A$ and $\rho_B$ of the joint state $\rho_{AB}$, respectively. This expression reveals a fundamental property of the quantum mutual information that is not evident from the decomposition rules, namely the \emph{data-processing inequality}. Specifically, this property entails that quantum mutual information is monotonically non-increasing under any local processing of information on $A$ and $B$. Its satisfaction directly follows from the monotonicity under quantum channels of the underlying relative entropy and the above equivalence. This property is crucial for many applications of the mutual information since it corresponds to our intuition that correlations cannot be created by acting on only one of the constituent parts (or by acting on them independently).

Following the footsteps of R\'enyi~\cite{renyi61}, various generalisations of the concept of mutual information to a one-parameter family (parametrised by the \emph{R\'enyi order} $\alpha$) of operationally significant measures have been proposed both in the classical~(see, e.g.,~\cite{verdu15,tomamichel17c}, for recent discussions) and the quantum setting~(see, e.g.,~\cite{hayashitomamichel15c}). We call such measures (quantum) R\'enyi mutual information if they satisfy the data-processing inequality.
Definitions that naturally satisfy the data-processing inequality are found by replacing the relative entropy in~Eq.~\eqref{eq:def-via-divergence} with a (quantum) R\'enyi divergence. In this work we will limit our attention to definitions based on minimal (or `sandwiched') R\'enyi divergence~\cite{lennert13,wilde13} as we plan to take advantage of their close relation to non-commutative norms. For example, we will consider the quantum R\'enyi mutual information (see Section~\ref{sec:sand} for formal definitions)
\begin{align}
	I_{\alpha}(A\;;\>\!B)_\rho := \min_{\sigma_B} {D}_{\alpha}(\rho_{AB} \| \rho_A \otimes \sigma_B) \,,
	\label{eq:def-q}
\end{align}
where ${D}_{\alpha}(\cdot\|\cdot)$ denotes the minimal R\'enyi divergence and $\alpha \in [\frac12,\infty)$. This generalises the quantum mutual information, which can be recovered by setting $\alpha = 1$. This and similar constructions of quantum R\'enyi mutual information have found direct operational interpretation in classical and quantum hypothesis testing~\cite{hayashitomamichel15c,tomamichel17c} and are widely used in the analysis of channel coding problems (see, e.g.,~\cite{wilde13,mosonyi11,mosonyi14-2}). 
It is important to note here that the equivalences in Eqs.~\eqref{eq:def-via-decomposition} and~\eqref{eq:def-via-divergence} no longer hold in the case of R\'enyi mutual information, and in particular if we start with Eq.~\eqref{eq:def-q} then we are lacking a way to decompose the R\'enyi mutual information into R\'enyi entropies of its marginals.

The main result of this paper is to fill this gap, in the sense of providing decomposition rules for (quantum) R\'enyi mutual information that generalise Eq.~\eqref{eq:def-via-decomposition}. These rules take the form
\begin{align}
	I_{\alpha}(A\;;\>\!B) \geq H_{\beta}(A) - H_{\gamma}(A|B)  \quad \textrm{and} 	\quad
	I_{\alpha}(A\;;\>\!B) \leq H_{\bar\beta}(A) - H_{\bar\gamma}(A|B)
\end{align}
for suitable choices of R\'enyi orders $\beta, \bar\beta, \gamma$ and $\bar\gamma$. The formal result is presented in Theorem~\ref{result1}. The two inequalities above reduce to the equality in Eq.~\eqref{eq:def-via-decomposition} when we take all the parameters to $1$.
The proof uses norm interpolation techniques~\cite{Bei}, and is inspired by the proof of similar decompositions in~\cite{Dup}, which take the form of chain rules for conditional R\'enyi entropies.

We explore an application of Theorem~\ref{result1} to information exclusion relations. These relations, dual to Heisenberg's uncertainty principle, give upper bounds on the total amount of correlation between a state measured in either one of two incompatible bases and some classical memory with information about how the initial state was prepared. Hall~\cite{H95} first formalised this as the bound
\begin{equation}\label{HallRel}
I(X:Y) + I(Z:Y) \leq \log(d^2c) \,.
\end{equation}
In the above, $X$ is the classical register produced by the measurement map on the $A$ subsystem in an orthonormal basis $\{\ket{e_x}\}_x$, and similarly for $Z$ and $\{\ket{f_z}\}_z$. Moreover, $d$ is the dimension of the system $A$ and $c = \max_{x,z}| \! \braket{e_x}{f_z} \!|^2$ is the maximal overlap of the two bases. Our second result is to give a parametrised family of information exclusion relations for R\'enyi mutual information. 
Indeed, Theorem~\ref{result2} not only generalises Eq.~\eqref{HallRel}, but also the following improvement involving quantum memory~\cite{coles14}:
\begin{equation}
I(X:B) + I(Z:B) \leq \log(d^2c) - H(A|B)
\end{equation}
Further applications, for example to quantum cryptography, and a potential generalisation to the smooth entropy framework~\cite{renner05b,mythesis}, are left as open questions.
 
The remainder of this paper is structured as follows. Section~\ref{sec:not} introduces the necessary notation and definitions. Section~\ref{sec:res} formally presents our two main theorems, with the proofs given in Sections~\ref{sec:proof1} and~\ref{sec:proof2}, respectively.

\section{Notation and definitions}
\label{sec:not}

We use standard notation as summarised in Table~\ref{nottable}. Additionally, shorthands are used for the following expressions:
\begin{align}
\alpha' = \frac{\alpha-1}{\alpha}\quad\text{and} \quad
\hat\alpha =\frac{\alpha}{2\alpha-1}, \text{ i.e. } \frac{1}{\alpha}+\frac{1}{\hat\alpha} = 2.
\end{align}
This produces some equivalences that will be useful for later calculations. We have
\begin{align}
\alpha'\alpha&=\alpha'\left(\frac{1}{1-\alpha'} \right)= \frac{\alpha'}{1-\alpha'},\\
\alpha'\hat\alpha &= \alpha'\left(\frac{\alpha}{2\alpha-1}\right)=\alpha'\alpha\left(\frac{2}{1-\alpha'}-1\right)^{-1} = \left(\frac{\alpha'}{1-\alpha'}\right)\left(\frac{1-\alpha'}{1+\alpha'}\right) = \frac{\alpha'}{1+\alpha'}\quad\text{and}\\
-\alpha' &= \frac{1-\alpha}{\alpha} = \left(1- \frac{\hat\alpha}{2\hat\alpha-1}\right)\left(\frac{2\hat\alpha-1}{\hat\alpha}\right) =\frac{2\hat\alpha-1}{\hat\alpha}-1 = \hat\alpha'.
\end{align}
\begin{table}\caption{Overview of notation}\label{nottable}
	\begin{tabular}{|c|l|}
		\hline
		Symbol&Meaning\\
		\hhline{|=|=|}
		$\log$& The logarithm to base 2\\
		\hline
		$A,B,C$& Quantum system or subsystems\\
		\hline
		$\Hil_A, \Hil_B, \Hil_C$& The Hilbert spaces corresponding to the quantum systems, $A$, $B$ and $C$.\\
		\hline
		$\Hil_{AB}$& $\Hil_A \otimes \Hil_B$\\
		\hline
		$\Lin(A, B)$& Set of linear operators from $\Hil_A$ to $\Hil_B$.\\
		\hline
		$\Lin(A)$& $\Lin(A, A)$\\
		\hline
		$\Den(A)$& The set of positive semi-definite operators in $\Lin(A)$ with unit trace.\\
		\hline
		$\CPTP(A,B)$&The set of completely-positive trace-preserving operator maps from $\Lin(A)$ to $\Lin(B)$\\
		\hline
		$\tr_A(\cdot)$& The partial trace over $A$, $\tr_A(\rho_A \otimes \rho_B) = \tr(\rho_A)\rho_B$.\\
		\hline
		$\rho_A$ & The marginal on $A$. For $\rho_{AB} \in \Den(AB)$, $\rho_A = \tr_B(\rho_{AB})$\\
		\hline
		$\sigma\gg\rho$& $\sigma$ `dominates' $\rho$, i.e the kernel of $\sigma$ is contained in the kernel of $\rho$.\\
		\hline
		$\sigma\perp\rho$& $\sigma$ and $\rho$ are perpendicular, i.e. the images of $\sigma$ and $\rho$ have empty intersection.\\$\sigma\not\perp\rho$&Not perpendicular\\
		\hline
		$\id_A\in \Lin(A)$ & The identity map on $\Hil_A$\\
		\hline
		$X_{A\rightarrow B}$& An operator in $\Lin(A,B)$\\
		\hline
		$\Op_{A\rightarrow B}(\cdot)$& $\Op_{A\rightarrow B}:\Hil_{AB} \rightarrow \Lin(A,B)$. For basis vectors $\ket{e_i}\in \Hil_A, \ket{f_j} \in \Hil_B$, $\ket{e_i}\otimes \ket{f_j}\mapsto \ketbra{f_j}{e_i}$\\
		\hline
		$\Op(\ket{\psi})$& The operator representation of $\ket{\psi}\in \Hil$\\
		\hline
		$\|\cdot\|_p$& The operator $p$-norm, $\|X\|_p=\tr\left[(X^\dagger X)^\frac{p}{2}\right]^\frac{1}{p}$. This is not a norm for $p<1$.\\
		\hline
	\end{tabular}
\end{table}

\subsection{R\'enyi entropy}
\label{sec:relent}

Originally proposed in~\cite{renyi61}, the R\'enyi entropy of order $\alpha\in (0,1) \cup (1, \infty)$ of a classical random variable $X$, distributed according to the probability law $P$, is defined as
\begin{equation}
H_\alpha(X) = \frac{1}{1-\alpha}\log\left(\sum_x P(X=x)^\alpha\right).
\end{equation}
It generalises the well-known Shannon entropy~\cite{shannon48} and serves to weigh outcomes with more or less likelihood differently depending on the order $\alpha$. The Shannon entropy is recovered in the limit $\alpha\rightarrow 1$.

The quantum R\'enyi entropy is a quantum generalisation of the R\'enyi entropy and is derived in an analogous way to von Neumann entropy: for a probability density matrix $\rho\in \Den(A)$, we define
\begin{equation}
H_\alpha(A)_\rho = \frac{1}{1-\alpha}\log\tr(\rho^\alpha).
\end{equation}
There are some particular choices of $\alpha$ which are either mathematically convenient or reflect specific physical situations.

When $\alpha\rightarrow1$ we recover the von Neumann entropy, which we denote
\begin{equation}
H_1(A)_\rho := H(A)_\rho = -\tr(\rho\log\rho).
\end{equation}
When $\alpha = 2$ we have the `collision' entropy, which characterises the purity of a quantum system:
\begin{equation}
H_2(A)_\rho = -\log\tr(\rho^2).
\end{equation}
The max-entropy could be naturally defined for $\alpha \rightarrow 0$ but, due to some mathematical restrictions, we instead use $\alpha = 1/2$. We have
\begin{align}
H_0(A)_\rho &= \log|\supp(\rho)|,\\
H_\frac{1}{2}(A)_\rho = H_\text{max}(A)_\rho &= 2\log\tr(\sqrt{\rho}).
\end{align}
The last quantity, and perhaps the most useful except for $\alpha \rightarrow 1$, is the min-entropy which we obtain for $\alpha\rightarrow \infty$.
\begin{align}
H_\infty(A)_\rho = H_\text{min}(A)_\rho = -\log\max_{i}\lambda_i,
\end{align}
where $\{\lambda_i\}$ are the eigenvalues of $\rho$.

\subsection{Minimal R\'enyi divergence and related quantities}
\label{sec:sand}

We now introduce the `sandwiched' R\'enyi divergence~\cite{lennert13, wilde13}. For $\rho, \sigma \in \Den(A)$ and ${\alpha \in (0,1) \cup (1,\infty)}$
\begin{equation}
D_\alpha(\rho\|\sigma):=\begin{cases}
\frac{1}{\alpha-1}\log\tr\left[\left({\sigma}^{\frac{1-\alpha}{2\alpha}}\rho{\sigma}^{\frac{1-\alpha}{2\alpha}}\right)^\alpha\right]\quad&\text{if } \rho \not\perp \sigma \wedge (\sigma\gg\rho \vee \alpha<1)\\
\infty \quad&\text{else}
\end{cases} .
\end{equation}
From this point we will refer to this quantity as simply `R\'enyi divergence'. Before we explore this concept further we first look at the simpler case where $\alpha = 1$. This case recovers Umegaki's relative entropy~\cite{umegaki62}, often called the quantum relative entropy,
\begin{equation}
D(\rho\|\sigma) = \tr\left[\rho(\log\rho - \log\sigma)\right] \,.
\end{equation}
We can use this quantity to obtain definitions of the von Neumann entropies that are equivalent to the intuitive definitions derived from the chain and decomposition rules, i.e.
\begin{align}
H(AB)_\rho &= -D(\rho_{AB}\|\id_{AB}),\\
H(A|B)_\rho &= -D(\rho_{AB}\|\id_A\otimes \rho_B)\\
&=-\tr(\rho_{AB}\log\rho_{AB})+\tr(\rho_B\log\rho_B)\\
&=H(AB)_\rho - H(B)_\rho,\\
I(A:B)_\rho&= D(\rho_{AB}\|\rho_A\otimes\rho_B)\\
&=\tr(\rho_{AB}\log\rho_{AB})-\tr(\rho_{A}\log\rho_{A})-\tr(\rho_{B}\log\rho_{B})\\
&=H(A)_\rho + H(B)_\rho - H(AB)_\rho\\
&=H(A)_\rho - H(A|B)_\rho.
\end{align}
In fact we obtain equivalent definitions for the mutual information by minimising over one or both subsystems in the following way. If we consider the positive-definiteness of the relative entropy due to Klein's inequality~\cite{Klein} and observe that when $\sigma_B = \rho_B$ then $D(\rho_B\|\sigma_B) = 0$ we can write
\begin{align}
I(A:B)_\rho &= D(\rho_{AB}\|\rho_A\otimes\rho_B) + \inf_{\sigma_B \in \Den(B)}D(\rho_B\|\sigma_B)\\
&=\tr(\rho_{AB}\log\rho_{AB} -\rho_{AB}\log(\rho_A\otimes\rho_B)) +\tr(\rho_B\log\rho_B) -\inf_{\sigma_B \in \Den(B)}\tr(\rho_B\log\sigma_B)\\
&= \inf_{\sigma_B \in \Den(B)}\tr(\rho_{AB}\log\rho_{AB} -\rho_{AB}\log(\rho_A\otimes\sigma_B))\\
&=\inf_{\sigma_B \in \Den(B)}D(\rho_{AB}\|\rho_A\otimes\sigma_B).
\end{align}
A similar calculation can be used to show that the equivalence also holds when minimised over both subsystems, i.e.
\begin{equation}
I(A:B)_\rho = \inf_{\substack{\sigma_A \in \Den(A)\\\sigma_B \in \Den(B)}}D(\rho_{AB}\|\sigma_A\otimes\sigma_B).
\end{equation}
Unfortunately, this equivalence does not extend to R\'enyi divergence but we can still define the relevant quantum R\'enyi entropies accordingly. The following notation is adapted from the notation introduced in~\cite{TBH}. We define the quantum R\'enyi entropy as
\begin{align}
H_\alpha(A)_\rho &= -D_\alpha(\rho_A\| \id_A).
\end{align}
The `sandwiched' conditional entropy can be defined
\begin{align}
\label{maxcond} H^\downarrow_\alpha(A|B)_\rho &= -D_\alpha(\rho_{AB}\|\id_A\otimes \rho_B)\quad\text{or}\\
H^\uparrow_\alpha(A|B)_\rho &= -\inf_{\sigma_B \in \Den(B)}D_\alpha(\rho_{AB}\|\id_A\otimes \sigma_B).
\end{align}
The ordering $H^\downarrow_\alpha(A|B)_\rho\leq H^\uparrow_\alpha(A|B)_\rho$, obvious from the definition, is indicated by the direction of the superscript arrow. We can safely assume that $\id_A\otimes \sigma_B\gg \rho_{AB}$, since any choice of $\sigma_B$ where this is not the case would certainly not achieve the infimum. The `sandwiched' mutual information~\cite{hayashitomamichel15c} is defined
\begin{align}
I^\uparrow_\alpha(A\;;\>\!B)_\rho &= \inf_{\sigma_B \in \Den(B)} D_\alpha(\rho_{AB}\|\rho_A\otimes \sigma_B)\label{maxmut}\quad\text{or}\\
\label{minmut} I^\downarrow_\alpha(A:B)_\rho &= \inf_{\substack{\sigma_A \in \Den(A)\\\sigma_B \in \Den(B)}} D_\alpha(\rho_{AB}\|\sigma_A\otimes \sigma_B).
\end{align}
Similarly, the superscript arrows indicate the ordering of each version and we satisfy the support condition as result of the minimisations. The use of `;' in Eq.~\eqref{maxmut} indicates that this quantity is not symmetric in its arguments, whereas Eq.~\eqref{minmut} is.

There are a few properties of the R\'enyi divergence which we find particularly useful: it generalises the the von Neumann entropy, it is monotone in $\alpha$~\cite{Bei}, it is mathematically convenient as it is closely related to norms and exhibits duality relations~\cite{Bei, lennert13}, and it satisfies the data-processing inequality~\cite{Bei, FL}.

These properties naturally extend to any quantity defined using R\'enyi divergence, which coincide with the mathematical and physical interpretation of quantum entropies. Monotonicity in $\alpha$ reflects the expected behaviour of R\'enyi entropy when weighing more or less likely outcomes differently. The data-processing inequality reflects that entropy can only ever increase (or correlation decrease) when information is processed (on each system independently).

We also make use of the following notation for the generalised R\'enyi mutual information~\cite{hayashitomamichel15c} and conditional entropy:
\begin{align}
H_\alpha(\rho_{AB}\|\tau_B) &= -D_\alpha(\rho_{AB}\|\id_A\otimes \tau_B),\label{GCE} \\
I_\alpha(\rho_{AB}\|\tau_A) &= \inf_{\sigma_B \in \Den(B)}D_\alpha(\rho_{AB}\|\tau_A\otimes \sigma_B).\label{GMI}
\end{align}
We can easily verify that the above quantities generalise Eqs.~\eqref{maxcond}-\eqref{minmut}, i.e.
\begin{align}
&&H^\downarrow_\alpha(A|B)_\rho &= H_\alpha(\rho_{AB}\|\rho_B),& H^\uparrow_\alpha(A|B)_\rho &= \sup_{\sigma_B \in \Den(B)}H_\alpha(\rho_{AB}\|\sigma_B),&&\\
&&I^\uparrow_\alpha(A\;;\>\!B)_\rho &= I_\alpha(\rho_{AB}\|\rho_A),& I^\downarrow_\alpha(A:B)_\rho &= \inf_{\sigma_A \in \Den(A)}I_\alpha(\rho_{AB}\|\sigma_A).&&
\end{align}

\section{Formal results and discussion}
\label{sec:res}

\subsection{Quantum R\'enyi mutual information decomposition rules}\label{cont1}
We can establish the following generalisations of the von Neumann mutual information decomposition rule in the form of inequalities whose direction depends on the choice of R\'enyi order of each entropic quantity.
\begin{theorem}\label{result1}
	For $\alpha>0$, $\beta, \gamma\geq1/2$ satisfying $\dfrac{\alpha}{\alpha-1} = \dfrac{\beta}{\beta-1}+\dfrac{\gamma}{\gamma-1}$
	we have, in the case when ${(\alpha-1)}{(\beta-1)}{(\gamma-1)}>0$,
	\begin{align}
	I^\uparrow_\gamma(A\;;\>\!B)_\rho&\geq H_\beta(B)_\rho-H_\alpha^\downarrow(B|A)_{\rho},\label{res1-1}\\
	I^\downarrow_\gamma(A:B)_\rho&\geq H_\beta(B)_\rho-H^\uparrow_\alpha(B|A)_{\rho}.\label{res1-2}
	\end{align}
	Otherwise, when $(\alpha-1)(\beta-1)(\gamma-1)<0$,
	\begin{align}
	I^\uparrow_\gamma(A\;;\>\!B)_\rho&\leq H_\beta(B)_\rho-H^\downarrow_\alpha(B|A)_{\rho},\label{res1-3}\\
	I^\downarrow_\gamma(A:B)_\rho&\leq H_\beta(B)_\rho-H^\uparrow_\alpha(B|A)_{\rho}.\label{app}
	\end{align}
\end{theorem}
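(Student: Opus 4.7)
The plan is to reduce all four inequalities to a single master inequality between three sandwiched R\'enyi divergences, and then prove it via a Beigi-style operator-valued three-lines interpolation argument, in the spirit of Dupuis' chain-rule proof~\cite{Dup}.

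\textbf{Reformulation.} Using $H_\mu(B)_\rho = -D_\mu(\rho_B\|\id_B)$, $H_\alpha^\downarrow(B|A)_\rho = -D_\alpha(\rho_{AB}\|\id_B\otimes\rho_A)$ and the defining optimisations of $I_\gamma^{\uparrow/\downarrow}$ and $H^\uparrow_\alpha$, each of the four bounds is equivalent (in the $(\alpha-1)(\beta-1)(\gamma-1)>0$ regime) to the master inequality $D_\alpha(\rho_{AB}\|\id_B\otimes\tau_A) \leq D_\beta(\rho_B\|\id_B) + D_\gamma(\rho_{AB}\|\tau_A\otimes\sigma_B)$ holding for every $\tau_A \in \Den(A)$ and $\sigma_B \in \Den(B)$: \eqref{res1-1} takes $\tau_A = \rho_A$ and then passes to the infimum over $\sigma_B$ on the right, while \eqref{res1-2} takes $\tau_A$ and $\sigma_B$ to be the joint minimiser of $I^\downarrow_\gamma$ and absorbs an additional infimum over $\tau_A$ on the left into $H^\uparrow_\alpha$. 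Equations \eqref{res1-3}--\eqref{app} are the reverse of the same master inequality, which is what the argument produces once the prefactors $\frac{1}{\mu-1}$ pulled out of $D_\mu$ flip sign as some of $\alpha,\beta,\gamma$ cross $1$.

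\textbf{Schatten-norm form and complex interpolation.} Using $D_\mu(\rho\|\sigma) = \frac{\mu}{\mu-1}\log\|\sigma^{(1-\mu)/(2\mu)}\rho\sigma^{(1-\mu)/(2\mu)}\|_\mu$, the master inequality becomes a multiplicative bound between the Schatten $\alpha$-, $\beta$- and $\gamma$-norms of three explicit products of noncommuting powers of $\rho_A, \rho_B, \tau_A, \sigma_B$ and $\rho_{AB}$. The hypothesis $\alpha/(\alpha-1) = \beta/(\beta-1) + \gamma/(\gamma-1)$ is exactly the exponent identity one expects from such a complex-interpolation inequality. Following Dupuis, I would introduce an analytic operator-valued function $F(z) = \tau_A^{a(z)}\sigma_B^{b(z)}\rho_B^{c(z)}\rho_{AB}\rho_B^{c(z)}\sigma_B^{b(z)}\tau_A^{a(z)}$ on the strip $\{0 \leq \Re z \leq 1\}$ with exponents linear in $z$, tuned so that $F(\theta)$ is the operator whose $\alpha$-norm represents the left-hand side of the master inequality, while $F(it)$ and $F(1+it)$ are unitarily equivalent to the operators whose $\gamma$- and $\beta$-norms represent the two terms on the right. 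Beigi's operator-valued Hadamard three-lines theorem~\cite{Bei} then yields $\|F(\theta)\|_\alpha \leq \bigl(\sup_t\|F(it)\|_\gamma\bigr)^{1-\theta}\bigl(\sup_t\|F(1+it)\|_\beta\bigr)^\theta$; matching the exponents at the endpoints pins down $\theta$ and produces precisely the stated constraint on $\alpha,\beta,\gamma$.

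\textbf{Sign regimes and the main obstacle.} For $(\alpha-1)(\beta-1)(\gamma-1)<0$ the reversed bounds \eqref{res1-3}--\eqref{app} follow either by repeating the interpolation and carefully tracking the sign of $\frac{1}{\mu-1}$ as $\mu$ crosses $1$, or more economically by invoking the duality $1/\mu + 1/\hat\mu = 2$ recorded in Section~\ref{sec:not}, which interchanges the $\mu>1$ and $\mu<1$ regimes for the sandwiched divergence. The main obstacle I anticipate is the interpolation step itself: one must pick the exponents $a(z), b(z), c(z)$ and the point $\theta \in (0,1)$ so that each of the three boundary and interior operators has the desired algebraic form \emph{and} the exponent arithmetic collapses exactly to $\alpha/(\alpha-1) = \beta/(\beta-1) + \gamma/(\gamma-1)$; the subsequent bookkeeping across the four sign cases is tedious but essentially mechanical once the $\alpha,\beta,\gamma>1$ case has been set up.
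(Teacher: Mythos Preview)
Your reduction to a ``master inequality'' and the plan to interpolate \`a la Beigi--Dupuis are on the right track, but the concrete interpolation function you propose cannot work. You set
\[
F(z) \;=\; \tau_A^{a(z)}\sigma_B^{b(z)}\rho_B^{c(z)}\,\rho_{AB}\,\rho_B^{c(z)}\sigma_B^{b(z)}\tau_A^{a(z)},
\]
with $\rho_{AB}$ sitting in the middle, and ask that at one boundary $\|F(1+\im t)\|_\beta$ be (up to unitary conjugation) $\|\rho_B\|_\beta$. This is impossible in general: for invertible $\tau_A,\sigma_B,\rho_B$ the rank of $F(z)$ equals the rank of $\rho_{AB}$, which for a generic mixed $\rho_{AB}$ exceeds $\dim\Hil_B\geq\mathrm{rank}(\rho_B)$. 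No choice of affine exponents $a,b,c$ and no unitary massaging can collapse an operator on $AB$ with $\rho_{AB}$ in the middle down to something whose Schatten norm is $\tr(\rho_B^\beta)^{1/\beta}$. This is not a technicality---it is the main obstruction that the purification step is designed to remove.

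What the paper actually does is purify $\rho_{AB}$ to $\ket\varphi_{ABC}$ and set $X=\Op_{B\to AC}(\ket\varphi)$, so that $X^\dagger X=\rho_B$ and hence $\|X\|_{2\beta}^{2/\beta'}=2^{-H_\beta(B)_\rho}$ automatically. The conditional entropy is then $\|(\sigma_A^{-1}\otimes\tau_C)^{\alpha'/2}X\|_2$ after a $\sup_{\tau_C}$, and the mutual information is handled not through $\sigma_B$ at all but via the duality $I_\gamma(\rho_{AB}\|\sigma_A)=-I_{\hat\gamma}(\rho_{AC}\|\sigma_A^{-1})$, which turns it into $\|(\sigma_A^{-1}\otimes\tau_C)^{\gamma'/2}X\|_{2\hat\gamma}$. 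All three quantities are thus norms of left-multiplications of the \emph{same} operator $X$, and Beigi's theorem applies with $F(z)=(\sigma_A^{-1}\otimes\tau_C)^{z\gamma'/2}X$. The variable $\sigma_B$ never enters the interpolation; it has been traded for the purifying variable $\tau_C$, over which one takes a supremum on both sides after interpolating. Your remark that the $\hat\mu$ duality ``interchanges the $\mu>1$ and $\mu<1$ regimes'' is also off the mark: in the paper that duality is used once, to rewrite $I_\gamma$ on $AB$ as a quantity on $AC$, and the sign regimes are instead handled by running the interpolation three times with different choices of $\theta$ (Propositions~\ref{part1}--\ref{part3}), tracking whether $\alpha',\beta',\gamma'$ are positive or negative when one raises both sides to the power $2/\alpha'$ (or $2/\gamma'$, $2/\beta'$).
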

This theorem follows from two components. The first is a re-expression of the relevant entropies to an operator norm form using operator-vector correspondence (see~\cite{Wat}), based on the technique used in~\cite{Dup}. The second is a Riesz-Thorin type interpolation result for operator norms developed in~\cite{Bei}.

In the case of Eqs.~\eqref{app} and~\eqref{res1-2}, we have symmetry in $A$ and $B$, giving us two additional inequalities with the systems swapped on the right-hand side.

Although there are many valid choices for the parameters, these choices are surprisingly limited. Fixing one parameter often leads to a restriction on the available regions for the other two, especially in the case when one parameter is greater than 2, see appendix~\ref{res} for more detail. When applied to Theorem~\ref{result2}, these restrictions result in a relatively weak statement where we must have $\alpha>\frac{2}{3}$. Whether these valid ranges can be improved is still an open question.

However, due the monotonicity in $\alpha$ of the R\'enyi divergence we have that the quantum R\'enyi condtional entropy and the quantum R\'enyi mutual information are non-increasing and non-decreasing respectively. This implies that Eqs.~\eqref{res1-1} and~\eqref{res1-2} additionally hold for $\frac{\alpha}{\alpha-1} \geq \frac{\beta}{\beta-1}+\frac{\gamma}{\gamma-1}$
and Eqs.~\eqref{res1-3} and~\eqref{app} additionally hold for $\frac{\alpha}{\alpha-1} \leq \frac{\beta}{\beta-1}+\frac{\gamma}{\gamma-1}$.

Incorporating the conditional entropy chain rule from~\cite{Dup} we can establish the following supplementary inequalities.
\begin{corollary}\label{noncond}
	For $\alpha, \beta, \gamma, \delta>1/2$ such that $\dfrac{\delta}{\delta-1} = \dfrac{\alpha}{\alpha-1} + \dfrac{\beta}{\beta-1}+ \dfrac{\gamma}{\gamma-1}$ we have, when $\delta<\alpha, \beta, \gamma$
	\begin{align}
	I^\downarrow_\gamma(A:B)_\rho &\geq H_\alpha(A)_\rho + H_\beta(B)_\rho - H_\delta(AB)_\rho,
	\end{align}
	and, when $\delta>\alpha, \beta, \gamma$
	\begin{align}
	I^\uparrow_\gamma(A\;;\>\!B)_\rho &\leq H_\alpha(A)_\rho + H_\beta(B)_\rho - H_\delta(AB)_\rho.\label{reverse}
	\end{align}
\end{corollary}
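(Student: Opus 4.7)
The plan is to derive Corollary~\ref{noncond} by composing Theorem~\ref{result1} with a chain rule for sandwiched R\'enyi conditional entropies from~\cite{Dup}. The idea is to first invoke Theorem~\ref{result1} to trade the mutual information for $H_\beta(B)_\rho$ together with a conditional entropy $H^{\uparrow/\downarrow}_{\alpha_1}(B|A)_\rho$ at some auxiliary order $\alpha_1$, and then invoke the chain rule to trade that conditional entropy for $H_\delta(AB)_\rho - H_\alpha(A)_\rho$. Adding the two bounds produces the claimed inequality in one line.

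Concretely, for the first inequality ($\delta < \alpha, \beta, \gamma$), I would introduce an auxiliary order $\alpha_1 > 1/2$ defined by $\frac{\alpha_1}{\alpha_1-1} := \frac{\beta}{\beta-1} + \frac{\gamma}{\gamma-1}$, so that the corollary's hypothesis reads equivalently $\frac{\delta}{\delta-1} = \frac{\alpha}{\alpha-1} + \frac{\alpha_1}{\alpha_1-1}$. Applying Eq.~\eqref{res1-2} of Theorem~\ref{result1} to the triple $(\alpha_1, \beta, \gamma)$ yields $I^\downarrow_\gamma(A:B)_\rho \geq H_\beta(B)_\rho - H^\uparrow_{\alpha_1}(B|A)_\rho$. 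A chain rule from~\cite{Dup}, applied with $A$ and $B$ swapped and no outer conditioning register, then gives $H_\delta(AB)_\rho \geq H_\alpha(A)_\rho + H^\uparrow_{\alpha_1}(B|A)_\rho$, i.e.\ $H^\uparrow_{\alpha_1}(B|A)_\rho \leq H_\delta(AB)_\rho - H_\alpha(A)_\rho$. Substituting this into the mutual-information bound proves the first assertion. The second assertion ($\delta > \alpha, \beta, \gamma$) is proved identically but with every inequality reversed, using Eq.~\eqref{res1-3} of Theorem~\ref{result1} together with the reverse-direction chain rule $H_\delta(AB)_\rho \leq H_\alpha(A)_\rho + H^\downarrow_{\alpha_1}(B|A)_\rho$ from~\cite{Dup}.

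The main obstacle will be not the composition itself---which is essentially a single substitution---but the parameter bookkeeping required to confirm simultaneously that $(\alpha_1, \beta, \gamma)$ sits in the regime required by Theorem~\ref{result1} and that $(\delta, \alpha, \alpha_1)$ sits in the regime required by the chain rule from~\cite{Dup}. Specifically, one must verify that $\delta < \alpha, \beta, \gamma$ forces $\alpha_1 > 1$, so that both sign conditions $(\alpha_1-1)(\beta-1)(\gamma-1) > 0$ and $(\delta-1)(\alpha-1)(\alpha_1-1) > 0$ hold with the correct signs; this should follow from the monotonicity of $x \mapsto x/(x-1)$ on $(1, \infty)$ and on $(1/2, 1)$. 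The symmetric check for the reversed case $\delta > \alpha, \beta, \gamma$ then forces $\alpha_1 < 1$, flipping both sign conditions as required.
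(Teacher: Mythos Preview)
Your proposal is correct and follows essentially the same route as the paper: compose Theorem~\ref{result1} with the Dupuis chain rule at an auxiliary order $\alpha_1$, then bookkeep the sign conditions via the ordering of the parameters. The only cosmetic difference is which pair you feed into Theorem~\ref{result1} first: you combine $(\beta,\gamma)$ to produce $\alpha_1$ and then chain with $\alpha$, whereas the paper combines $(\alpha,\gamma)$ first (using the $A\leftrightarrow B$ symmetry of $I^\downarrow$) and then chains with $\beta$; by the $(\alpha,A)\leftrightarrow(\beta,B)$ symmetry of the first inequality these are interchangeable, and for the second inequality (where $I^\uparrow$ is not symmetric) your grouping is in fact the one the paper must also use.
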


\subsection{A quantum R\'enyi information exclusion relation}
As an immediate application of Theorem~\ref{result1}, we can derive the following bipartite information exclusion relation for quantum R\'enyi entropies.
\begin{theorem}\label{result2}
	For $\alpha>2/3$, $1/2 \leq \beta, \gamma<4/3$ satisfying both $\dfrac{\alpha}{\alpha-1} \leq \dfrac{1}{\beta-1}+\dfrac{1}{\gamma-1}$ and $({\alpha-1}) ({\beta-1}) ({\gamma-1}){<0}$,
	\begin{equation}\label{MICR}
	I^\uparrow_\beta(B\;;\>\!X)_\rho + I^\uparrow_\gamma(B\;;\>\!Z)_\rho \leq \log (d^2c)-H^\downarrow_\alpha(A|B)_\rho.
	\end{equation}
\end{theorem}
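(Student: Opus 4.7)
The plan is to invoke Theorem~\ref{result1}, in its upper-bound form~\eqref{res1-3}, once for each of the two mutual informations on the left-hand side of~\eqref{MICR}, and then close with a R\'enyi entropic uncertainty relation that carries a quantum-memory correction. The whole argument is essentially the R\'enyi analogue of Coles et al.'s argument: decompose each $I$ into marginal minus conditional entropy, bound marginals by $\log d$ using classicality, and use a tripartite uncertainty relation to dispose of the conditional entropies.

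Concretely, let $\rho_{XB}$ and $\rho_{ZB}$ denote the classical--quantum states obtained by measuring $A$ in the bases $\{\ket{e_x}\}$ and $\{\ket{f_z}\}$ respectively. Applying~\eqref{res1-3} (together with the monotonicity-in-$\alpha$ extension noted just after Theorem~\ref{result1}) to each term yields bounds of the form
\begin{align*}
I^\uparrow_\beta(B\;;\>\!X)_\rho &\leq H_{\beta_1}(X)_\rho - H^\downarrow_{\alpha_1}(X|B)_\rho,\\
I^\uparrow_\gamma(B\;;\>\!Z)_\rho &\leq H_{\beta_2}(Z)_\rho - H^\downarrow_{\alpha_2}(Z|B)_\rho,
\end{align*}
for suitable marginal-entropy orders $\beta_1, \beta_2$ and conditional-entropy orders $\alpha_1, \alpha_2$ (the latter determined from the former and from $\beta, \gamma$ via the constraint of Theorem~\ref{result1}). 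Since $X$ and $Z$ are classical $d$-valued registers, one has $H_{\beta_i}(X)_\rho, H_{\beta_i}(Z)_\rho \leq \log d$; summing the two inequalities therefore gives
\begin{align*}
I^\uparrow_\beta(B\;;\>\!X)_\rho + I^\uparrow_\gamma(B\;;\>\!Z)_\rho \leq 2\log d - \bigl[ H^\downarrow_{\alpha_1}(X|B)_\rho + H^\downarrow_{\alpha_2}(Z|B)_\rho \bigr].
\end{align*}
It remains to lower-bound the bracket. For this I invoke a R\'enyi entropic uncertainty relation with quantum side information, of the form
\begin{align*}
H^\downarrow_{\alpha_1}(X|B)_\rho + H^\downarrow_{\alpha_2}(Z|B)_\rho \geq \log\frac{1}{c} + H^\downarrow_\alpha(A|B)_\rho,
\end{align*}
valid whenever the triple $(\alpha_1, \alpha_2, \alpha)$ lies in the admissible region of the relation. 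Substituting and using $2\log d - \log(1/c) = \log(d^2 c)$ then yields~\eqref{MICR}.

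The main obstacle is the simultaneous satisfaction of two sets of parameter constraints. Theorem~\ref{result1} ties each $\alpha_i$ to its marginal order $\beta_i$ and to the mutual-information order by $\frac{\alpha_i}{\alpha_i - 1} = \frac{\beta_i}{\beta_i - 1} + \frac{\beta}{\beta-1}$ (respectively with $\gamma$), subject to the sign condition that selects~\eqref{res1-3}; the uncertainty relation in turn has its own admissible range for $(\alpha_1, \alpha_2, \alpha)$. Choosing $\beta_1, \beta_2$ to reconcile the two is what forces the restrictions in the statement: the bounds $\alpha > 2/3$ and $\beta, \gamma < 4/3$ keep the induced conditional-entropy orders $\alpha_1, \alpha_2$ in the regime in which the quantum-memory uncertainty relation is available, while the compound inequality $\frac{\alpha}{\alpha-1} \leq \frac{1}{\beta-1} + \frac{1}{\gamma-1}$ is exactly what remains after the auxiliary orders $\beta_1, \beta_2$ are eliminated, and the sign condition $(\alpha-1)(\beta-1)(\gamma-1) < 0$ is what puts us on the branch of Theorem~\ref{result1} that gives the required upper bound. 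Any improvement either in the valid parameter region of Theorem~\ref{result1} or in that of the underlying uncertainty relation would immediately widen the admissible region in Theorem~\ref{result2}.
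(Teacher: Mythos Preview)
Your proposal is correct and follows essentially the same route as the paper: apply the upper-bound decomposition~\eqref{res1-3} once to each mutual information, bound the marginal entropies by $\log d$, and close with a R\'enyi uncertainty relation of the form $H^\downarrow_{\alpha_1}(X|B)+H^\downarrow_{\alpha_2}(Z|B)\geq -\log c + H^\downarrow_\alpha(A|B)$. The paper does not take this last relation as a black box but proves it as a separate lemma (combining the argument of~\cite[Theorem~7.6]{Fin} with Dupuis's chain rule~\cite{Dup}); it then fixes the auxiliary marginal orders to $1/2$ to obtain the stated constraint $\frac{\alpha}{\alpha-1}\leq\frac{1}{\beta-1}+\frac{1}{\gamma-1}$, which you correctly anticipate but leave implicit.
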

This follows from an application of Eq.~\eqref{app} to a generalisation of a Maassen-Uffink type relation developed in~\cite{Fin} using a method analogous to the derivation of the Hall principle Eq.~\eqref{HallRel}.

Given that $B$ is some classical memory, the Hall principle can be recovered when $\alpha, \beta, \gamma \rightarrow 1$.

We also have the following interesting cases: Choosing $\alpha = 2/3$ implies $\gamma = \frac{\beta}{2\beta -1}$, which in turn gives us $\beta = 1/2 \implies \gamma \rightarrow \infty$.

With $\alpha \rightarrow \infty$ we obtain $\gamma = \frac{2\beta - 3}{\beta -2}$ and $\beta = 1/2 \implies \gamma = 4/3$. This case is further explored in Corollary~\ref{res2c}.

For $\alpha = 2$ we have $\gamma = \frac{3\beta -4}{2\beta-3}$ and $\beta = 1/2 \implies \gamma = 5/4$.
%

\section{Proof of Theorem~\ref{result1}}
\label{sec:proof1}

The proof of Theorem~\ref{result1} draws on two major components: an interpolation result for operator norms (Theorem~\ref{Bei}) and a re-expression of the relevant entropic quantities to operator norms (Lemma~\ref{threeEnt}) which the interpolation can then be performed on. This machinery is then employed in Propositions~\ref{part1},~\ref{part2} and~\ref{part3}, which explore the possible permutations of parameters.
\subsection{Expressing the entropic quantities as operator norms}
We employ the following interpolation result from~\cite{Bei} which generalises Riesz-Thorin interpolation to operator norms. This result relies on H\"olders inequality and the log convexity found in the Hadamard three-line theorem. We present it in a slightly less general form than originally proposed. 
\begin{theorem}[Beigi~\cite{Bei}]\label{Bei}
	Let $F:S\rightarrow \Lin(A)$, be a bounded map from the complex strip $S:=\{{z\in\C}:0\leq\Ree(z)\leq1\}$ into the linear operators on $\Hil_A$ which is holomorphic on the interior of $S$ and continuous on the boundary. Let $0\leq \theta\leq 1$.
	\begin{align}
	&\text{For}&&\frac{1}{p_\theta} = \frac{1-\theta}{p_0} + \frac{\theta}{p_1},&&\\
	&\text{and with}&&M_k = \sup_{t\in \R}\|F(k+it)\|_{p_k},&&\\
	&\text{we have}&&\|F(\theta)\|_{p_\theta} \leq M_0^{1-\theta}M_1^{\theta}.&&
	\end{align}
\end{theorem}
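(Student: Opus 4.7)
The plan is to run the complex-interpolation proof of Riesz-Thorin in the noncommutative setting, using the duality of Schatten norms and Hadamard's three-line theorem. The two algebraic inputs are: (i) the trace-pairing duality $\|X\|_{p} = \sup\{|\tr(XY)| : \|Y\|_{q}\leq 1\}$ with $1/p + 1/q = 1$, and (ii) H\"older's inequality $|\tr(XY)| \leq \|X\|_p\|Y\|_q$, both of which hold on the finite-dimensional operator algebra $\Lin(A)$.

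By (i) it suffices, for every fixed $Y\in\Lin(A)$ with $\|Y\|_{q_\theta}\leq 1$ (where $1/p_\theta+1/q_\theta = 1$), to prove $|\tr(F(\theta)\,Y)|\leq M_0^{1-\theta}M_1^\theta$. Take the polar decomposition $Y=U|Y|$, define $1/q(z):=(1-z)/q_0+z/q_1$ (with convention $1/\infty=0$), and set
\begin{equation*}
Y(z):=U\,|Y|^{q_\theta/q(z)},
\end{equation*}
where the complex power is defined via the spectral functional calculus on the support of $|Y|$ and extended by zero on $\ker|Y|$. Writing $|Y|=\sum_i c_i P_i$ with $c_i>0$, one sees that $Y(z)$ is a finite sum of operator-valued entire exponentials, and $Y(\theta)=Y$. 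On each boundary line $\Ree(z)=k\in\{0,1\}$ the real part of the exponent equals $q_\theta/q_k$, which a direct computation with the functional calculus turns into
\begin{equation*}
\|Y(k+it)\|_{q_k}^{q_k}=\tr\bigl(|Y|^{q_\theta}\bigr)=\|Y\|_{q_\theta}^{q_\theta}\leq 1.
\end{equation*}

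Next I would form the scalar function $G(z):=\tr\bigl(F(z)\,Y(z)\bigr)$. This is holomorphic on the open strip and continuous on $\overline S$ by the corresponding properties of $F$ and the entirity of $Y(z)$; boundedness on $\overline S$ follows because $F$ is bounded by hypothesis and $\|Y(z)\|_\infty$ is controlled uniformly on the strip by $\max_i c_i^{\Ree(q_\theta/q(z))}$, which stays bounded since $\Ree(q_\theta/q(z))$ is a bounded affine function of $z$ on $\overline S$. On the boundary, H\"older's inequality via (ii) gives
\begin{equation*}
|G(k+it)|\leq \|F(k+it)\|_{p_k}\,\|Y(k+it)\|_{q_k}\leq M_k.
\end{equation*}
Hadamard's three-line theorem then yields $|G(\theta)|\leq M_0^{1-\theta}M_1^\theta$, and since $G(\theta)=\tr(F(\theta)\,Y)$ and $Y$ was arbitrary in the unit $q_\theta$-ball, taking the supremum via duality delivers $\|F(\theta)\|_{p_\theta}\leq M_0^{1-\theta}M_1^\theta$ as claimed.

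The main obstacle is the clean handling of degenerate data: ensuring $Y(z)$ is genuinely entire when $|Y|$ has a nontrivial kernel (tracked by inserting the support projection $P:=\sum_{c_i>0}P_i$ and setting $Y(z):=U|Y|^{q_\theta/q(z)}P$), checking the boundary identity at $q_k=\infty$ (where the exponent on that line collapses to $0$ and $Y(k+it)$ becomes the partial isometry $UP$ with $\|UP\|_\infty\leq 1$), and verifying the uniform boundedness of $G$ needed for Hadamard's theorem. The last point is the most delicate but reduces to combining equivalence of Schatten norms in finite dimensions with the boundedness of the exponent's real part on $\overline S$; once this bookkeeping is done the three-line estimate is immediate and the proof concludes.
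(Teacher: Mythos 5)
The paper does not prove this theorem---it is imported verbatim from Beigi's work, with the remark that it ``relies on H\"older's inequality and the log convexity found in the Hadamard three-line theorem''---and your argument is precisely that standard proof: Schatten-norm duality to reduce to a scalar function, an interpolated dual family $Y(z)=U|Y|^{q_\theta/q(z)}$ with unit $q_k$-norm on the boundary lines, H\"older to bound $G(z)=\tr(F(z)Y(z))$ there, and the three-lines theorem to conclude. Your handling of the edge cases (support projection for singular $|Y|$, the $q_k=\infty$ line, boundedness of $G$ on the closed strip) is correct, with the only implicit hypothesis worth flagging being $p_0,p_1\geq 1$, which is needed for the duality step and is indeed assumed in Beigi's original statement and satisfied in all of the paper's applications.
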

The following lemma yields some useful operator norm forms of the relevant entropic quantities that are compatible with Theorem~\ref{Bei}.
\begin{lemma}\label{threeEnt}
For a pure state $\ket{\varphi} \in \Hil_{ABC}$ with $\ketbra{\varphi}{\varphi} = \rho$ and ${X = X_{B\rightarrow AC} = \Op_{B\rightarrow AC}(\ket{\varphi})}$. Given $(\sigma_A\otimes\sigma_B) \gg \rho_{AB}$ and $\alpha\in (0, 1)\cup (1, \infty)$ we have 
\begin{align}
H_\alpha(\rho_{AB}\|\sigma_A) &= -\log \sup_{\tau_{C}\in \Den(C)}\left\|\left(\sigma_A^{-1}\otimes\tau_C\right)^\frac{\alpha'}{2}X\right\|_{2}^\frac{2}{\alpha'},\label{CRE}\\
H_\alpha(B)_\rho &= -\log\left\|X\right\|_{2\alpha}^\frac{2}{\alpha'},\label{RE}
\end{align}
if in addition $\alpha\geq \frac{1}{2}$,
\begin{align}
I_\alpha(\rho_{AB}\|\sigma_B) &=\log\sup_{\tau_C\in \Den(C)}\left\|\left(\sigma_A^{-1}\otimes\tau_C\right)^\frac{\alpha'}{2}X\right\|_{2\hat\alpha}^\frac{2}{\alpha'}.\label{MI}
\end{align}
\end{lemma}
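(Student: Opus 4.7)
The plan is to establish each of the three identities by expanding the sandwiched R\'enyi divergence as a trace, applying the operator-vector correspondence (which yields $X^\dagger X = \rho_B^{T}$ and $X X^\dagger = \rho_{AC}$), and invoking a purification-based variational identity that rewrites a Schatten $p$-norm on $AB$ as an optimisation over $\tau_C\in\Den(C)$.

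The easiest case is Eq.~\eqref{RE}: since $X^\dagger X$ has the same non-zero spectrum as $\rho_B$, one has $\|X\|_{2\alpha}^{2\alpha}=\tr[(X^\dagger X)^\alpha]=\tr[\rho_B^\alpha]$, and combining with $H_\alpha(B)_\rho=\tfrac{1}{1-\alpha}\log\tr[\rho_B^\alpha]$ and the identity $\alpha\alpha'=\alpha-1$ gives Eq.~\eqref{RE} after rearrangement. For Eq.~\eqref{CRE}, I would expand $H_\alpha(\rho_{AB}\|\sigma_A)=\tfrac{1}{1-\alpha}\log\tr[M^\alpha]$ with
\begin{align*}
M=(\sigma_A^{-\alpha'/2}\otimes\id_B)\,\rho_{AB}\,(\sigma_A^{-\alpha'/2}\otimes\id_B)=\tr_C[\ketbra{\psi}{\psi}], \qquad \ket{\psi}=(\sigma_A^{-\alpha'/2}\otimes\id_{BC})\ket{\varphi}.
\end{align*}
The key step is the variational identity
\begin{align*}
\tr[M^\alpha]^{1/\alpha}=\mathop{\mathrm{opt}}_{\tau_C\in\Den(C)}\bra{\psi}(\id_{AB}\otimes\tau_C^{\alpha'})\ket{\psi},
\end{align*}
where $\mathrm{opt}=\sup$ for $\alpha>1$ and $\mathrm{opt}=\inf$ for $\alpha\in(0,1)$; this follows from Schmidt-decomposing $\ket{\psi}$ across the $AB\,|\,C$ cut and running a H\"older/Lagrange-multiplier argument on the Schmidt coefficients. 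A short computation then rewrites
\begin{align*}
\bra{\psi}(\id_{AB}\otimes\tau_C^{\alpha'})\ket{\psi}=\bra{\varphi}(\sigma_A^{-\alpha'}\otimes\id_B\otimes\tau_C^{\alpha'})\ket{\varphi}=\bigl\|(\sigma_A^{-\alpha'/2}\otimes\tau_C^{\alpha'/2})X\bigr\|_{2}^{2},
\end{align*}
using $XX^\dagger=\rho_{AC}$ and cyclicity of the trace. Raising both sides to the $1/\alpha'$-th power converts $\mathrm{opt}$ uniformly into a supremum on both sides of $\alpha=1$ (the sign flip of $\alpha'$ at $\alpha=1$ is absorbed), yielding Eq.~\eqref{CRE}.

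For Eq.~\eqref{MI} the same purification trick applies, but the definition of $I_\alpha$ carries an additional infimum over $\sigma_B\in\Den(B)$. After expanding the divergence in trace form, I would evaluate the $\sigma_B$-infimum in closed form by a Sibson-type H\"older argument; the effect is to replace the norm index on $X$ from $2$ to $2\hat\alpha$, reflecting the H\"older conjugacy $\tfrac{1}{2\hat\alpha}+\tfrac{1}{2\alpha}=1$ that is equivalent to the paper's identity $\tfrac{1}{\alpha}+\tfrac{1}{\hat\alpha}=2$. The restriction $\alpha\geq\tfrac{1}{2}$ is precisely the requirement $\hat\alpha>0$ needed for the $2\hat\alpha$-norm to be well defined, and the hypothesis $(\sigma_A\otimes\sigma_B)\gg\rho_{AB}$ ensures that all inverse powers of $\sigma_A$ acting on $\rho_{AB}$ are well defined throughout.

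The main obstacle will be Eq.~\eqref{MI}: one has to swap an infimum over $\sigma_B$ with a supremum over $\tau_C$, which is not an immediate application of a minimax theorem because the objective is not transparently jointly convex/concave. In addition, the two sub-ranges $\alpha\in[1/2,1)$ and $\alpha>1$ must be verified separately, since the sign of $\alpha'$ flips across $\alpha=1$ and with it the direction of every optimisation; care is required to confirm that the final expression takes the \emph{same} form in both regimes.
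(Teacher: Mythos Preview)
Your arguments for Eq.~\eqref{RE} and Eq.~\eqref{CRE} are essentially the paper's: for \eqref{RE} the paper also just uses $\rho_B=X^\dagger X$ (in its convention there is no transpose, but the spectra agree so this is immaterial), and for \eqref{CRE} the paper cites the variational identity from~\cite{lennert13} that you propose to rederive, then applies the same operator--vector manipulations to reach the $2$-norm expression.

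The substantive difference is in Eq.~\eqref{MI}. You propose to keep the definition on $AB$, introduce $\tau_C$ via the purification identity, and then evaluate the $\sigma_B$-infimum in closed form by a Sibson/H\"older argument, accepting that this forces you to interchange $\inf_{\sigma_B}$ with $\sup_{\tau_C}$. The paper takes a different and shorter path: it invokes the duality relation $I_\alpha(\rho_{AB}\|\sigma_A)=-I_{\hat\alpha}(\rho_{AC}\|\sigma_A^{-1})$ from~\cite{hayashitomamichel15c} (stated as Lemma~\ref{Dual}), which in one stroke converts the infimum over $\sigma_B$ on $AB$ into an infimum over $\omega_C$ on $AC$ and replaces $\alpha$ by $\hat\alpha$. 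After that, the norm expression \eqref{MINorm} for $I_{\hat\alpha}(\rho_{AC}\|\sigma_A^{-1})$ is immediate from the operator--vector correspondence, and the identity $\hat\alpha'=-\alpha'$ turns the negated infimum into the desired supremum with exactly the $2\hat\alpha$-norm. Thus the minimax exchange you flag as the main obstacle never arises in the paper's proof: the duality lemma has already absorbed it. Your route is in principle workable---indeed, carrying it through amounts to reproving that duality---but it is longer and requires the joint convexity/concavity analysis you anticipate, whereas the paper's route reduces \eqref{MI} to a one-line application of a cited result. The restriction $\alpha\geq\tfrac12$ enters for the same reason in both approaches (well-definedness of the $2\hat\alpha$-norm, equivalently the range of validity of Lemma~\ref{Dual}).
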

The proof of Lemma~\ref{threeEnt} relies on the operator-vector correspondence. Below, we summarise the relevant properties which follow from the definitions in Table~\ref{nottable}. For proofs see \cite[Chap. 1.1]{Wat}.
\begin{lemma}\label{OV1}
	Let $\ket{\psi} \in \Hil_{AB}$, $\rho_A\in \Lin(A)$ and $\sigma_B\in \Lin(B)$. Then $\Op_{A\rightarrow B}\left(\rho_A\otimes \sigma_B \ket{\psi}\right) = \sigma_B \Op_{A\rightarrow B}(\ket{\psi})\rho_A$.
\end{lemma}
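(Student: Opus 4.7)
The plan is to prove Lemma~\ref{OV1} by direct verification in an orthonormal basis, exploiting linearity to reduce to basis elements.

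First, I would observe that both sides of the claimed identity are linear in $\ket{\psi}$ and separately bilinear in the pair $(\rho_A,\sigma_B)$. It therefore suffices to check the identity on pure tensors $\ket{\psi} = \ket{e_i}\otimes\ket{f_j}$ together with rank-one operators $\rho_A = \ketbra{e_b}{e_a}$ and $\sigma_B = \ketbra{f_d}{f_c}$, after which the general statement follows by extending linearly in each argument.

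For such elementary inputs the left-hand side is a two-step computation: first act with $\rho_A \otimes \sigma_B$ on $\ket{e_i}\otimes\ket{f_j}$ using the orthogonality $\braket{e_a}{e_i} = \delta_{ai}$ and $\braket{f_c}{f_j} = \delta_{cj}$ to obtain $\delta_{ai}\delta_{cj}\,\ket{e_b}\otimes\ket{f_d}$, then apply the definition of $\Op_{A \to B}$ from Table~\ref{nottable} to produce the rank-one operator $\delta_{ai}\delta_{cj}\,\ketbra{f_d}{e_b}$. The right-hand side is computed in the reverse order: apply $\Op_{A \to B}$ first to the basis vector to get $\ketbra{f_j}{e_i}$, then carry out the matrix product $\sigma_B \cdot \ketbra{f_j}{e_i} \cdot \rho_A$ using the same orthogonality relations to collapse the inner brackets. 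Matching the two resulting rank-one operators gives the identity on the basis spanning set, and linear extension in $\ket{\psi}$, $\rho_A$, and $\sigma_B$ yields the full statement.

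The main (and essentially only) obstacle is notational bookkeeping: one must apply the convention for $\Op_{A \to B}$ from Table~\ref{nottable} consistently, paying careful attention to which index becomes the ket and which becomes the bra under the operator-vector correspondence, and to how $\rho_A$ acts on a bra from the right versus acting on a ket from the left. No analytic machinery — no norm inequalities, convexity arguments, or optimisations — is required, and the content of the lemma is essentially that the operator-vector correspondence intertwines local tensor actions on the bipartite vector with two-sided matrix multiplication on the associated operator, so it will agree with the corresponding statement in~\cite[Chap.~1.1]{Wat} once the basis-level identification has been carried out.
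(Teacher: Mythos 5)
Your overall strategy---reduce by multilinearity to basis kets and rank-one operators, then verify the identity entrywise---is the natural one, and it is all that could be asked here (the paper itself gives no proof of Lemma~\ref{OV1}, deferring to~\cite[Chap.~1.1]{Wat}). The reduction step is sound: both sides are indeed linear in $\ket{\psi}$ and in each of $\rho_A$, $\sigma_B$ separately. The gap is at the very last step, where you assert that the two rank-one operators match; they do not. With $\ket{\psi}=\ket{e_i}\otimes\ket{f_j}$, $\rho_A=\ketbra{e_b}{e_a}$ and $\sigma_B=\ketbra{f_d}{f_c}$, the left-hand side is, as you compute,
\begin{align*}
\Op_{A\rightarrow B}\left(\delta_{ai}\delta_{cj}\,\ket{e_b}\otimes\ket{f_d}\right)=\delta_{ai}\delta_{cj}\,\ketbra{f_d}{e_b},
\end{align*}
but the right-hand side is
\begin{align*}
\ketbra{f_d}{f_c}\cdot\ketbra{f_j}{e_i}\cdot\ketbra{e_b}{e_a}=\braket{f_c}{f_j}\braket{e_i}{e_b}\,\ketbra{f_d}{e_a}=\delta_{cj}\,\delta_{ib}\,\ketbra{f_d}{e_a}.
\end{align*}
Because $\rho_A$ multiplies the \emph{bra} $\bra{e_i}$ from the right, the Kronecker delta it produces is $\braket{e_i}{e_b}$ rather than $\braket{e_a}{e_i}$, and the surviving $A$-label is $a$ rather than $b$. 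Taking $i=a\neq b$ and $c=j$ makes the left-hand side nonzero and the right-hand side zero, so the claimed matching fails.

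What your computation actually establishes is the identity $\Op_{A\rightarrow B}\left((\rho_A\otimes\sigma_B)\ket{\psi}\right)=\sigma_B\,\Op_{A\rightarrow B}(\ket{\psi})\,\rho_A^{T}$, with the transpose taken in the basis $\{\ket{e_i}\}$ used to define $\Op_{A\rightarrow B}$; this is the form in which the statement appears in~\cite{Wat}, and the lemma as printed in the paper is off by exactly this transpose. (The omission is harmless for the paper's subsequent uses of the lemma, since there the operator acting on the domain system of $\Op$ is always the identity, which is its own transpose; it is not harmless for your proof, which claims the general statement.) You flag in your final paragraph that one must attend to ``how $\rho_A$ acts on a bra from the right versus acting on a ket from the left,'' which is precisely where the discrepancy lives---but you then conclude the identity holds as stated rather than carrying that observation through. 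To repair the proof, either prove the transposed identity, or add the hypothesis that $\rho_A$ equals its transpose in the chosen basis.
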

\begin{lemma}\label{Isom}
	Let $\ket{\psi}\in\Hil_{AB}$. Then
	\begin{equation}
	\|\ket{\psi}\|_2 = \braket{\psi}{\psi} = \tr\left[\Op_{A\rightarrow B}(\ket{\psi})^\dagger \Op_{A\rightarrow B}(\ket{\psi})\right] =\|\Op_{A\rightarrow B}(\ket{\psi})\|_2.
	\end{equation}
\end{lemma}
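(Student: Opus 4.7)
The plan is to verify the equalities by direct expansion in a product basis, exploiting only the action of $\Op_{A\rightarrow B}$ on basis vectors together with orthonormality. Fix orthonormal bases $\{\ket{e_i}\}_i$ of $\Hil_A$ and $\{\ket{f_j}\}_j$ of $\Hil_B$ (the same bases that implicitly define $\Op_{A\rightarrow B}$), and write
\begin{equation}
\ket{\psi} = \sum_{i,j} \psi_{ij} \,\ket{e_i}\otimes\ket{f_j}.
\end{equation}
Orthonormality immediately gives $\braket{\psi}{\psi} = \sum_{i,j}|\psi_{ij}|^2$.

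By linearity and the defining rule $\ket{e_i}\otimes\ket{f_j}\mapsto \ketbra{f_j}{e_i}$, the associated operator is
\begin{equation}
X := \Op_{A\rightarrow B}(\ket{\psi}) = \sum_{i,j}\psi_{ij}\,\ketbra{f_j}{e_i},
\end{equation}
whose adjoint reads $X^\dagger = \sum_{i,j}\overline{\psi_{ij}}\,\ketbra{e_i}{f_j}$. Multiplying these out and using $\braket{f_j}{f_l} = \delta_{jl}$ collapses the quadruple sum to
\begin{equation}
X^\dagger X = \sum_{i,j,k}\overline{\psi_{ij}}\,\psi_{kj}\,\ketbra{e_i}{e_k},
\end{equation}
and taking the trace against $\{\ket{e_i}\}$ yields $\tr(X^\dagger X) = \sum_{i,j}|\psi_{ij}|^2$. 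Since by definition $\|X\|_2 = \sqrt{\tr(X^\dagger X)}$, this closes the chain of equalities.

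There is no real obstacle here: the lemma is essentially an unpacking of the definition of $\Op_{A\rightarrow B}$ together with basis orthonormality, and is the content of the isometry asserted by its name. The only subtlety worth flagging is a square-root convention, since the displayed chain conflates $\|\cdot\|_2$ with $\|\cdot\|_2^2$; the substantive identity is $\braket{\psi}{\psi} = \tr(X^\dagger X)$, from which the norm-level statement follows by taking a square root. Note also that the identity is basis-independent even though the computation used a specific basis, because both sides are intrinsic (the inner product on $\Hil_{AB}$ and the Hilbert--Schmidt inner product on $\Lin(A,B)$) and $\Op_{A\rightarrow B}$ is well defined up to the basis convention chosen for the codomain.
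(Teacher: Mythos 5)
Your computation is correct and is exactly the standard basis-expansion argument; the paper itself offers no proof of this lemma, deferring to Watrous (Chap.~1.1), where the vectorisation map is shown to be an isometry for the Hilbert--Schmidt inner product in precisely this way. You are also right to flag that the displayed chain conflates $\|\cdot\|_2$ with $\|\cdot\|_2^2$ (the first equality should read $\|\ket{\psi}\|_2^2 = \braket{\psi}{\psi}$, and likewise for the operator norm); the substantive identity $\braket{\psi}{\psi} = \tr\left[\Op_{A\rightarrow B}(\ket{\psi})^\dagger \Op_{A\rightarrow B}(\ket{\psi})\right]$ is what the rest of the paper actually uses.
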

\begin{lemma}\label{OV3}
	Let $\ket{\psi}\in \Hil_{AB}$, $\rho = \ketbra{\psi}{\psi}$. Then
	$\rho_A = \Op_{A\rightarrow B}(\ket{\psi})^\dagger \Op_{A\rightarrow B}(\ket{\psi})$
	and\\$
	{\rho_B = \Op_{A\rightarrow B}(\ket{\psi})\Op_{A\rightarrow B}(\ket{\psi})^\dagger}
	$.
\end{lemma}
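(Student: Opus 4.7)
The plan is to deduce both identities from Lemmas~\ref{OV1} and~\ref{Isom}, exploiting that $\Op_{A\rightarrow B}$ is a linear isometry between $\Hil_{AB}$ (with the standard inner product) and $\Lin(A,B)$ (with the Hilbert--Schmidt inner product). The payoff of going through these two earlier lemmas, rather than a bare index computation, is that no choice of basis is needed and the two identities pop out by a single test-operator argument.

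First I would upgrade Lemma~\ref{Isom} from a statement about norms to a statement about inner products. The map $\Op$ is linear, so both $\braket{\phi}{\chi}$ and $\tr\bigl[\Op(\ket{\phi})^\dagger \Op(\ket{\chi})\bigr]$ are sesquilinear forms on $\Hil_{AB}\times\Hil_{AB}$, and Lemma~\ref{Isom} says they agree on the diagonal $\ket{\phi}=\ket{\chi}$. The polarisation identity therefore gives
\begin{equation*}
\braket{\phi}{\chi} = \tr\bigl[\Op(\ket{\phi})^\dagger \Op(\ket{\chi})\bigr] \quad \text{for all } \ket{\phi},\ket{\chi}\in\Hil_{AB}.
\end{equation*}

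Next I would prove $\rho_A = X^\dagger X$ by verifying that $\tr(M\rho_A)=\tr(M X^\dagger X)$ for every $M\in\Lin(A)$. The definition of the marginal gives
\begin{equation*}
\tr(M\rho_A) = \tr\bigl[(M\otimes\id_B)\ketbra{\psi}{\psi}\bigr] = \bra{\psi}(M\otimes\id_B)\ket{\psi}.
\end{equation*}
Applying Lemma~\ref{OV1} with the ``$\rho_A$'' of that lemma taken to be $M$ and the ``$\sigma_B$'' taken to be $\id_B$, one obtains $\Op\bigl((M\otimes\id_B)\ket{\psi}\bigr) = XM$. Substituting this into the polarised form of Lemma~\ref{Isom} with $\ket{\phi}=\ket{\psi}$ and $\ket{\chi}=(M\otimes\id_B)\ket{\psi}$ rewrites the right-hand side as $\tr\bigl[X^\dagger (XM)\bigr] = \tr(M X^\dagger X)$. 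Since $M$ is arbitrary, $\rho_A = X^\dagger X$ follows. The identity $\rho_B = XX^\dagger$ comes out of the same argument by swapping the roles of $A$ and $B$: applying Lemma~\ref{OV1} with $\rho_A=\id_A$ and $\sigma_B=N$ gives $\Op\bigl((\id_A\otimes N)\ket{\psi}\bigr) = NX$, and testing against arbitrary $N\in\Lin(B)$ yields $\tr(N\rho_B) = \tr(N XX^\dagger)$.

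I do not expect any real obstacle; the only thing worth double-checking is the polarisation upgrade of Lemma~\ref{Isom}, which is routine. For readers who prefer a basis-level check, an equivalent route is to fix orthonormal bases $\{\ket{e_i}\}$ of $\Hil_A$ and $\{\ket{f_j}\}$ of $\Hil_B$, expand $\ket{\psi}=\sum_{i,j}\psi_{ij}\ket{e_i}\otimes\ket{f_j}$, apply the defining rule $\ket{e_i}\otimes\ket{f_j}\mapsto\ketbra{f_j}{e_i}$, and match the explicit basis expansions of $X^\dagger X$ and $XX^\dagger$ termwise with those of $\rho_A$ and $\rho_B$ obtained from the partial trace.
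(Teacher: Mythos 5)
Your argument is correct as a derivation from the paper's Lemmas~\ref{OV1} and~\ref{Isom}, and it does more than the paper itself, which states Lemma~\ref{OV3} without proof and simply defers to \cite[Chap.~1.1]{Wat}. The polarisation upgrade of Lemma~\ref{Isom} is legitimate: both $\braket{\phi}{\chi}$ and $\tr\bigl[\Op(\ket{\phi})^\dagger\Op(\ket{\chi})\bigr]$ are sesquilinear forms that agree on the diagonal, hence agree everywhere. The test-operator step is also sound: $\tr(M\rho_A)=\tr(MX^\dagger X)$ for all $M\in\Lin(A)$ determines $\rho_A$ by nondegeneracy of the Hilbert--Schmidt pairing, and the $B$-marginal follows by the symmetric computation. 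So relative to the paper you have supplied a genuinely self-contained, basis-free proof where the paper offers only a citation.

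One caveat, which you would have discovered had you carried out the basis-level check you sketch at the end: with the paper's convention $\ket{e_i}\otimes\ket{f_j}\mapsto\ketbra{f_j}{e_i}$, Lemma~\ref{OV1} as printed is itself only true up to a transpose; a direct expansion gives $\Op_{A\rightarrow B}\bigl((\rho_A\otimes\sigma_B)\ket{\psi}\bigr)=\sigma_B\,\Op_{A\rightarrow B}(\ket{\psi})\,\rho_A^{T}$, with the transpose taken in the basis $\{\ket{e_i}\}$. Your derivation faithfully propagates whatever version of Lemma~\ref{OV1} is assumed: taking it as printed yields $\rho_A=X^\dagger X$, while the corrected version yields $\rho_A=(X^\dagger X)^{T}$, and an explicit coefficient computation confirms that $\rho_B=XX^\dagger$ holds exactly but $\rho_A$ and $X^\dagger X$ are transposes (equivalently, complex conjugates) of one another for generic complex $\ket{\psi}$. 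This is a defect inherited from the paper's statements rather than from your reasoning, and it is harmless for every use made of these identities here, since transposition preserves spectra and hence all the norms and entropic quantities involved.
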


\begin{proof}[Proof of Eq.~\eqref{CRE}]
We have from equation (19) in~\cite{lennert13} that we can write
\begin{align}
H_\alpha(\rho_{AB}\|\sigma_A) = -\log\sup_{\tau_C \in \Den(C)}\bra{\varphi}\sigma_A^{-\alpha'}\otimes\id_B\otimes\tau_C^{\alpha'}\ket{\varphi}^\frac{1}{\alpha'}.
\end{align}
Also, using Lemma~\ref{OV1} we have
\begin{align}
\Op_{B\rightarrow AC}\left(\sigma_A^\frac{-\alpha'}{2}\otimes\id_B\otimes\tau_C^\frac{\alpha'}{2}\ket{\varphi}\right) = \left(\sigma_A^\frac{-\alpha'}{2}\otimes\tau_C^\frac{\alpha'}{2}\right)X.
\end{align}
From this we can deduce, using Lemma~\ref{Isom},
\begin{align}
\bra{\varphi}\sigma_A^{-\alpha'}\otimes\id_B\otimes\tau_C^{\alpha'}\ket{\varphi}^\frac{1}{\alpha'} &= \left\|\sigma_A^\frac{-\alpha'}{2}\otimes\id_B\otimes\tau_C^\frac{\alpha'}{2}\ket{\varphi}\right\|_2^\frac{2}{\alpha'}\\
&= \left\|\Op_{B\rightarrow AC}\left(\sigma_A^\frac{-\alpha'}{2}\otimes\tau_C^\frac{\alpha'}{2}\ket{\varphi}\right)\right\|_2^\frac{2}{\alpha'}\\
&= \left\|\left(\sigma_A^\frac{-\alpha'}{2}\otimes\tau_C^\frac{\alpha'}{2}\right)X\right\|_2^\frac{2}{\alpha'}.
\end{align}
\end{proof}
\begin{proof}[Proof of Eq.~\eqref{RE}]
From Lemma~\ref{OV3} we can see that $\rho_B= X^\dagger X$, hence we have
\begin{align}
H_\alpha(B)_{\rho} = -\log\left\|X^\dagger X\right\|_\alpha^\frac{1}{\alpha'}= -\log\left\|X\right\|_{2\alpha}^\frac{2}{\alpha'}.
\end{align}
\end{proof}
The proof of $\eqref{MI}$ relies on a duality result from~\cite{hayashitomamichel15c}.
\begin{lemma}\label{Dual} For $\alpha\in [1/2, \infty)$, we have $I_\alpha(\rho_{AB}\|\tau_A) = - I_{\hat\alpha}(\rho_{AC}\|\tau_A^{-1})$.
\end{lemma}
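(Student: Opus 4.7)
The plan is to establish the duality $I_\alpha(\rho_{AB}\|\tau_A) = -I_{\hat\alpha}(\rho_{AC}\|\tau_A^{-1})$ by reducing both sides to Schatten-norm expressions on a purification $\ket{\varphi}_{ABC}$ via the operator-vector correspondence, and then exploiting that $(2\alpha, 2\hat\alpha)$ is a H\"older-conjugate pair of exponents.

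First, I would let $\ket{\varphi}_{ABC}$ purify $\rho_{AB}$ and set $X = \Op_{B\rightarrow AC}(\ket{\varphi})$, $Y = \Op_{C\rightarrow AB}(\ket{\varphi})$, so by Lemma~\ref{OV3}, $\rho_{AB} = YY^\dagger$ and $\rho_{AC} = XX^\dagger$. Unfolding the definition of the sandwiched R\'enyi divergence and using cyclicity of the trace gives, for every feasible $\sigma_B\in\Den(B)$,
\[
D_\alpha(\rho_{AB}\|\tau_A\otimes\sigma_B) = \frac{2}{\alpha'}\log\left\|(\tau_A\otimes\sigma_B)^{-\alpha'/2}Y\right\|_{2\alpha}.
\]
The analogous identity on the $\hat\alpha$-side, once the shorthands $\hat{\hat\alpha}=\alpha$ and $\hat\alpha'=-\alpha'$ from Section~\ref{sec:not} have been invoked, reads
\[
D_{\hat\alpha}(\rho_{AC}\|\tau_A^{-1}\otimes\sigma_C) = -\frac{2}{\alpha'}\log\left\|(\tau_A^{-1}\otimes\sigma_C)^{\alpha'/2}X\right\|_{2\hat\alpha}.
\]

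Second, I would take the infima over $\sigma_B\in\Den(B)$ and $\sigma_C\in\Den(C)$ on each side. After careful bookkeeping of the sign of $\alpha'$, which dictates whether $\inf$-over-$\log$ converts to $\inf$- or $\sup$-over-the-norm, the equality to prove reduces to
\[
\inf_{\sigma_B\in\Den(B)}\left\|(\tau_A\otimes\sigma_B)^{-\alpha'/2}Y\right\|_{2\alpha} = \sup_{\sigma_C\in\Den(C)}\left\|(\tau_A^{-1}\otimes\sigma_C)^{\alpha'/2}X\right\|_{2\hat\alpha}.
\]
The key algebraic coincidence that makes this plausible is $\tfrac{1}{2\alpha}+\tfrac{1}{2\hat\alpha}=1$, coming directly from the defining relation $\tfrac{1}{\alpha}+\tfrac{1}{\hat\alpha}=2$, so that the two Schatten norms are H\"older-conjugate and the inversion $\tau_A\mapsto \tau_A^{-1}$ on the dual side matches the sign flip in the sandwich exponent.

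The main obstacle is precisely this last identity between the two optimisations, which is where the pure-state structure of $\ket{\varphi}$ enters in an essential way. My line of attack would be to invoke the variational formula $\|M\|_p = \sup_{\omega\geq 0,\,\|\omega\|_q\leq 1}\tr(M\omega)$ for the Schatten norms on each side, converting the norm optimisations into trace pairings against auxiliary operators. The purification then causes the two double-optima to collapse into each other: the reductions $\rho_{AB}=YY^\dagger$ and $\rho_{AC}=XX^\dagger$ carry the same non-zero Schmidt spectrum, and under this correspondence the outer minimisation over $\sigma_B$ and the inner maximisation over $\omega$ on one side exchange roles with the minimisation over $\sigma_C$ and maximisation over $\omega$ on the other. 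I expect the residual technical difficulty to be justifying that the optimisers $\sigma_B^\ast$, $\sigma_C^\ast$ can be put into correspondence via $\ket{\varphi}$ (they should both be normalised partial traces of a common sandwich of $\ketbra{\varphi}{\varphi}$ by powers of $\tau_A^{\pm 1}$), and that the sign conventions behave consistently in both regimes $\alpha<1$ and $\alpha>1$; once this correspondence is in hand, the remainder is routine algebraic manipulation using the identities of Section~\ref{sec:not}.
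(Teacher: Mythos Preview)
The paper does not actually prove Lemma~\ref{Dual}; it simply imports the statement from~\cite{hayashitomamichel15c} and uses it as a black box in the derivation of Eq.~\eqref{MI}. There is therefore no proof in the paper to compare your attempt against.

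That said, your outline is sound and in fact closely parallels the argument in the cited reference. Your reduction of both sides to Schatten norms of $(\tau_A^{\mp\alpha'/2}\otimes\sigma_{B/C}^{\mp\alpha'/2})$ acting on the operator representations $Y$ and $X$ of the purification is correct, and the observation that $2\alpha$ and $2\hat\alpha$ are H\"older conjugate is exactly the structural reason the duality holds. The step you flag as the main obstacle\,---\,matching the $\inf_{\sigma_B}$ on one side with the $\sup_{\sigma_C}$ on the other\,---\,is indeed where all the content lies. One clean way to close this gap, rather than the somewhat vague ``double-optima collapse'' you sketch, is to first absorb $\tau_A^{-\alpha'/2}$ into the purification (replacing $\ket\varphi$ by $\tau_A^{-\alpha'/2}\ket\varphi$), after which the identity becomes precisely the known pure-state duality $H^\uparrow_\alpha(A|B)=-H^\uparrow_{\hat\alpha}(A|C)$ for the sandwiched conditional entropy, established in~\cite{lennert13,Bei} via the explicit Sibson-type optimiser $\sigma_B^\star\propto\bigl(\tr_A[(\cdot)^\alpha]\bigr)^{1/\alpha}$. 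Your H\"older-duality plan can also be made to work, but the explicit-optimiser route avoids the delicate interchange-of-optimisation argument you anticipate.
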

\begin{proof}[Proof of Eq.~\eqref{MI}]
Using operator-vector correspondence we can re-express the generalised R\'enyi mutual information using operator norms.
\begin{align}
I_\alpha\left(\rho_{AB}\|\tau_A\right) &= \frac{1}{\alpha-1}\log\inf_{\sigma_B\in \Den(A)}\tr\left(\left[\left(\tau_A\otimes\sigma_B\right)^\frac{-\alpha'}{2}\rho_{AB}\left(\tau_A\otimes\sigma_B\right)^\frac{-\alpha'}{2}\right]^\alpha\right)\\
&=\log\inf_{\sigma_B\in \Den(A)}\left\|\left(\tau_A\otimes\sigma_B\right)^\frac{-\alpha'}{2}X_{C\rightarrow AB}X_{C\rightarrow AB}^\dagger\left(\tau_A\otimes\sigma_B\right)^\frac{-\alpha'}{2}\right\|_\alpha^\frac{1}{\alpha'}\\
&=\log\inf_{\sigma_B\in \Den(A)}\left\|\left(\tau_A\otimes\sigma_B\right)^\frac{-\alpha'}{2}X_{C\rightarrow AB}\right\|_{2\alpha}^\frac{2}{\alpha'}.\numberthis\label{MINorm}
\end{align}
Using Lemma~\ref{Dual} and Eq.~\eqref{MINorm} we can write
\begin{align}
I_\alpha\left(\rho_{AB}\|\sigma_A\right) &=-I_{\hat\alpha}\left(\rho_{AC}\|\sigma_A^{-1}\right)\\
&=-\log\inf_{\omega_C\in \Den(C)}\left\|\left(\sigma_A^\frac{\hat\alpha'}{2}\otimes\omega_C^\frac{-\hat\alpha'}{2}\right)X\right\|_{2\hat\alpha}^\frac{2}{\hat\alpha'}\\
&=\log\sup_{\omega_C\in \Den(C)}\left\|\left(\sigma_A^\frac{-\alpha'}{2}\otimes\omega_C^\frac{\alpha'}{2}\right)X\right\|_{2\hat\alpha}^\frac{2}{\alpha'}.
\end{align}
\end{proof}

\subsection{Applying Beigi's Theorem}
Before moving forward with the main component of the proof of Theorem~\ref{result1} we will first look at some motivation for the choice of parameters.

We want to use the interpolation result to find inequalities of the form
\[
- H_\alpha(B|A)_\rho \leq I_\gamma(A:B)_\rho - H_\beta(B)_\rho.
\]

Exponentiating on both sides and keeping in mind that we can express the resulting quantities as operator norms to the power of a function of the relevant parameter we obtain an inequality of the form
\begin{equation}\label{genform}
\|X_{B|A}\|_{p_\alpha}^\frac{2}{\alpha'}\leq \|X_{B}\|_{p_\beta}^\frac{2}{\beta'}\|X_{A:B}\|_{p_\gamma}^\frac{2}{\gamma'},
\end{equation}
where $X_{B|A}$, etc. are simply place-holders for the actual operators, used for brevity. We can then put Eq.~\eqref{genform} in the form required for Beigi's Theorem by taking both sides to the power of $\frac{\alpha'}{2}$, resulting in
\[
\|X_{B|A}\|_{p_\alpha}\leq \|X_{B}\|_{p_\beta}^\frac{\alpha'}{\beta'}\|X_{A:B}\|_{p_\gamma}^\frac{\alpha'}{\gamma'},
\]
where $1-\theta = \frac{\alpha'}{\beta'}$ and $\theta = \frac{\alpha'}{\gamma'}$. This implies
\begin{align}
\label{prerel}1-\frac{\alpha'}{\gamma'} = \frac{\alpha'}{\beta'}
\implies \frac{1}{\alpha'} = \frac{1}{\beta'} + \frac{1}{\gamma'}.\numberthis
\end{align}

We can find the reverse of the inequality in Eq.~\eqref{genform} by negating all the exponents but this does not affect Eq.~\eqref{prerel}. Additionally, the order of the quantities in Eq.~\eqref{genform} has no effect, since we can choose a $\theta$ in each case that reproduces Eq.~\eqref{prerel}.

For example we could rewrite Eq.~\eqref{genform} as
\begin{equation}
\|X_{A:B}\|_{p_\gamma}^\frac{-2}{\gamma'}\leq \|X_{B|A}\|_{p_\alpha}^\frac{-2}{\alpha'}\|X_{B}\|_{p_\beta}^\frac{2}{\beta'}.
\end{equation}
To apply Theorem~\ref{Bei} in this case we would choose $1-\theta = \frac{\gamma'}{\alpha'}$ and $\theta = \frac{-\gamma'}{\beta'}$, resulting in $1+\frac{\gamma'}{\beta'} = \frac{\gamma'}{\alpha'}$,
which is again Eq.~\eqref{prerel}.

A more in-depth discussion of the implications and restrictions of this condition, which inform the choices in the following propositions, is deferred to Appendix~\ref{res}.

Theorem~\ref{result1} can be proved directly from the following propositions which make use of the above results.
\begin{proposition}\label{part1}
Let $\alpha,\beta,\gamma$ be such that $\frac{1}{\alpha'} = \frac{1}{\beta'}+\frac{1}{\gamma'}$. Then, the following holds:\\

For $\alpha\in (1,2),\beta,\gamma\in(1,\infty)$, we find
\begin{align}
I^\uparrow_\gamma(A\;;\>\!B)_\rho&\geq H_\beta(B)_\rho-H^\downarrow_\alpha(B|A)_{\rho} \text{ and}\label{p1ugd}\\
I^\downarrow_\gamma(A:B)_\rho&\geq H_\beta(B)_\rho-H^\uparrow_\alpha(B|A)_{\rho}\label{p1dgu}.
\end{align}

For $\alpha\in \left[2/3,1\right),\beta,\gamma\in\left[1/2,1\right)$, we find
\begin{align}
I^\uparrow_\gamma(A\;;\>\!B)_\rho&\leq H_\beta(B)_\rho-H^\downarrow_\alpha(B|A)_{\rho}\label{p1uld} \text{ and}\\
I^\downarrow_\gamma(A:B)_\rho&\leq H_\beta(B)_\rho-H^\uparrow_\alpha(B|A)_{\rho}\label{p1dlu}.
\end{align}
\end{proposition}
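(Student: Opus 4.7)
The plan is to reduce all four inequalities to a single pointwise operator-norm bound produced by Beigi's interpolation theorem and then unpack it via Lemma~\ref{threeEnt}. Fix a purification $\ket{\varphi}_{ABC}$ of $\rho_{AB}$, let $X = \Op_{B\to AC}(\ket{\varphi})$, and recall that for every auxiliary $\sigma_A\in\Den(A)$, Lemma~\ref{threeEnt} expresses $H_\beta(B)_\rho$, $H_\alpha(\rho_{AB}\|\sigma_A)$ and $I_\alpha(\rho_{AB}\|\sigma_A)$ as logarithms of Schatten norms of $X$, possibly weighted by $(\sigma_A^{-1}\otimes\tau_C)^{c/2}$. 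The three R\'enyi orders enter only through the exponent $c\in\{0,\alpha',\gamma'\}$ on this weight and through the Schatten indices $2$, $2\beta$, $2\hat\gamma$.

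For each fixed pair $(\sigma_A,\tau_C)$ I would apply Theorem~\ref{Bei} to the entire function
\[
F(z) \;=\; (\sigma_A^{-1}\otimes\tau_C)^{\gamma' z/2}\,X.
\]
Because $(\sigma_A^{-1}\otimes\tau_C)^{is}$ is unitary and Schatten norms are unitarily invariant, the boundary suprema reduce to $M_0 = \|X\|_{2\beta}$ and $M_1 = \bigl\|(\sigma_A^{-1}\otimes\tau_C)^{\gamma'/2}X\bigr\|_{2\hat\gamma}$. Choosing $\theta = \alpha'/\gamma'$, the hypothesis $\tfrac{1}{\alpha'}=\tfrac{1}{\beta'}+\tfrac{1}{\gamma'}$ automatically gives $1-\theta = \alpha'/\beta'$, so that $F(\theta) = (\sigma_A^{-1}\otimes\tau_C)^{\alpha'/2}X$. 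A short calculation using $\beta'\beta = \beta-1$ and $\gamma'\hat\gamma = (\gamma-1)/(2\gamma-1)$ together with the parameter relation verifies that $p_\theta = 2$, and Beigi's theorem then delivers
\[
\bigl\|(\sigma_A^{-1}\otimes\tau_C)^{\alpha'/2}X\bigr\|_{2} \;\leq\; \|X\|_{2\beta}^{\alpha'/\beta'}\bigl\|(\sigma_A^{-1}\otimes\tau_C)^{\gamma'/2}X\bigr\|_{2\hat\gamma}^{\alpha'/\gamma'}.
\]

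All four claims will follow from this single inequality by raising both sides to the power $2/\alpha'$ and applying $-\log$: when $\alpha>1$ the exponent is positive and the direction is preserved, producing Eqs.~\eqref{p1ugd} and~\eqref{p1dgu}, while when $\alpha<1$ the exponent flips the inequality and produces Eqs.~\eqref{p1uld} and~\eqref{p1dlu}. The $\uparrow/\downarrow$ variants are separated by the order of optimisation: setting $\sigma_A = \rho_A$ and taking $\sup_{\tau_C}$ on both sides yields the $H^\downarrow_\alpha/I^\uparrow_\gamma$ pair, while first taking $\sup_{\tau_C}$ and then $\inf_{\sigma_A}$ on both sides realises the $H^\uparrow_\alpha/I^\downarrow_\gamma$ pair; the ordering is harmless because the underlying bound is pointwise in $(\sigma_A,\tau_C)$. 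The main obstacle I anticipate is bookkeeping: checking that the restrictions $\alpha\in(1,2)$, $\beta,\gamma\in(1,\infty)$ in the first regime and $\alpha\in[2/3,1)$, $\beta,\gamma\in[1/2,1)$ in the second are precisely those for which $\theta\in[0,1]$ and for which the Schatten indices $2\beta$ and $2\hat\gamma$ lie in the range $[1,\infty]$ where Theorem~\ref{Bei} is valid and the norm representations of Lemma~\ref{threeEnt} apply.
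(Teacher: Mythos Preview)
Your proposal is correct and follows essentially the same route as the paper: the same interpolating family $F(z)=(\sigma_A^{-1}\otimes\tau_C)^{\gamma' z/2}X$, the same choice $\theta=\alpha'/\gamma'$ with $p_0=2\beta$, $p_1=2\hat\gamma$, $p_\theta=2$, and the same post-processing by raising to the power $2/\alpha'$, optimising over $\tau_C$, and either fixing $\sigma_A=\rho_A$ or optimising over $\sigma_A$. The paper handles your anticipated bookkeeping obstacle via Lemma~\ref{abylem}, which pins down exactly the parameter ranges in the two regimes.
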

\begin{proof}
Choose $F(z) = (\sigma_A^{-1}\otimes\tau_C)^\frac{z\gamma'}{2}X,\quad
\theta = \frac{\alpha'}{\gamma'},\quad
p_0 = 2\beta,\quad
p_1 = 2\hat\gamma.$
With these choices we can determine $\theta = \alpha'\left(\frac{1}{\alpha'}-\frac{1}{\beta'} \right) = 1 - \frac{\alpha'}{\beta'}$, hence $1-\theta = \frac{\alpha'}{\beta'}$.

We can also calculate the appropriate value of $p_\theta$ to use Theorem~\ref{Bei}:
\begin{align}
\frac{1}{p_\theta} = \frac{\alpha'}{2\beta'\beta} + \frac{\alpha'}{2\gamma'\hat\gamma}\implies\frac{2}{\alpha'p_\theta} =\frac{1-\beta'}{\beta'} + \frac{1+\gamma'}{\gamma'}=\frac{\gamma'+\beta'}{\beta'\gamma'}= \frac{1}{\beta'} + \frac{1}{\gamma'},
\end{align}
thus we can conclude that $p_\theta = 2$.

We can therefore calculate that
\begin{equation}
\left\|F(\theta)\right\|_{p_\theta} = \left\|(\sigma_A^{-1}\otimes\tau_C)^\frac{\alpha'}{2}X\right\|_2.
\end{equation}
Additionally,
\begin{equation}
\|F(\im t)\|_{p_0} = \left\|(\sigma_A^{-1}\otimes\tau_C)^\frac{\im t\gamma'}{2}X\right\|_{2\beta}
\end{equation}
and
\begin{equation}
\|F(1+\im t)\|_{p_1} = \left\|(\sigma_A^{-1}\otimes\tau_C)^\frac{(1+\im t)\gamma'}{2}X\right\|_{2\hat\gamma}.
\end{equation}
Since $(\sigma_A^{-1}\otimes\tau_C)^\frac{\im t\gamma'}{2}$ is unitary for all $t\in \R$ we can write
\begin{equation}
M_0 = \left\|X\right\|_{2\beta}\quad\text{and}\quad M_1 = \left\|(\sigma_A^{-1}\otimes\tau_C)^\frac{\gamma'}{2}X\right\|_{2\hat\gamma}.
\end{equation}
Applying Theorem~\ref{Bei} we have
\begin{align}
\left\|(\sigma_A^{-1}\otimes\tau_C)^\frac{\alpha'}{2}X\right\|_2&\leq \left\|X\right\|_{2\beta}^\frac{\alpha'}{\beta'}\left\|(\sigma_A^{-1}\otimes\tau_C)^\frac{\gamma'}{2}X\right\|_{2\hat\gamma}^\frac{\alpha'}{\gamma'}.
\end{align}

First, consider $\alpha'>0$. Maximising over $\tau_C$ on both sides we have
\begin{align}
\sup_{\tau_{C}\in \Den(C)}\left\|(\sigma_A^{-1}\otimes\tau_C)^\frac{\alpha'}{2}X\right\|_2^\frac{2}{\alpha'}&\leq \left\|X\right\|_{2\beta}^\frac{2}{\beta'}\sup_{\tau_{C}\in \Den(C)}\left\|(\sigma_A^{-1}\otimes\tau_C)^\frac{\gamma'}{2}X\right\|_{2\hat\gamma}^\frac{2}{\gamma'}.
\end{align}
Choose $\sigma_A = \rho_A$. Then
\begin{align}
\sup_{\tau_{C}\in \Den(C)}\left\|(\rho_A^{-1}\otimes\tau_C)^\frac{\alpha'}{2}X\right\|_2^\frac{2}{\alpha'}&\leq \sup_{\tau_{C}\in \Den(C)}\left\|X\right\|_{2\beta}^\frac{2}{\beta'}\left\|(\rho_A^{-1}\otimes\tau_C)^\frac{\gamma'}{2}X\right\|_{2\hat\gamma}^\frac{2}{\gamma'}.
\end{align}
Using Lemma~\ref{threeEnt}, we can rewrite this as
\begin{align}
-H^\downarrow_\alpha(B|A)_{\rho} &\leq -H_\beta(B)_\rho + I^\uparrow_\gamma(A\;;\>\!B)_\rho\\
\implies I^\uparrow_\gamma(A\;;\>\!B)_\rho&\geq H_\beta(B)_\rho-H^\downarrow_\alpha(B|A)_{\rho}.
\end{align}
Similarly, if we minimise over $\sigma_A$ on both sides we arrive at Eq.~\eqref{p1dgu}.

If instead $\alpha'<0$, we obtain
\begin{align}
\sup_{\tau_{C}\in \Den(C)}\left\|(\sigma_A^{-1}\otimes\tau_C)^\frac{\alpha'}{2}X\right\|_2^\frac{2}{\alpha'}&\geq \sup_{\tau_{C}\in \Den(C)}\left\|X\right\|_{2\beta}^\frac{2}{\beta'}\left\|(\sigma_A^{-1}\otimes\tau_C)^\frac{\gamma'}{2}X\right\|_{2\hat\gamma}^\frac{2}{\gamma'}.
\end{align}
We can again choose $\sigma_A = \rho_A$ or minimise over $\sigma_A$, giving us Eqs.~\eqref{p1uld} and~\eqref{p1dlu} respectively. The valid ranges can be determined using Lemma~\ref{abylem}.
\end{proof}
\begin{proposition}\label{part2}
Let $\alpha,\beta,\gamma$ be such that $\frac{1}{\alpha'} = \frac{1}{\beta'}+\frac{1}{\gamma'}$. Then, the following holds.\\

For $\alpha\in(0,1), \gamma \in [1/2,1),\beta\in(1,\infty)$, we find
\begin{align}
I^\uparrow_\gamma(A\;;\>\!B)_\rho&\geq H_\beta(B)_\rho-H^\downarrow_\alpha(B|A)_{\rho}\label{p2ugd} \text{ and}\\
I^\downarrow_\gamma(A:B)_\rho&\geq H_\beta(B)_\rho-H^\uparrow_\alpha(B|A)_{\rho}\label{p2dgu}.
\end{align}

For $\beta\in[1/2,1), \gamma \in (1,2),\alpha\in(1,\infty)$, we find
\begin{align}
I^\uparrow_\gamma(A\;;\>\!B)_\rho&\leq H_\beta(B)_\rho-H^\downarrow_\alpha(B|A)_{\rho}\label{p2uld} \text{ and}\\
I^\downarrow_\gamma(A:B)_\rho&\leq H_\beta(B)_\rho-H^\uparrow_\alpha(B|A)_{\rho}\label{p2dlu}.
\end{align}
\end{proposition}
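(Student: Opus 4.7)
The proof of Proposition~\ref{part2} follows the same two-step template as Proposition~\ref{part1}: re-express the three entropic quantities as operator norms via Lemma~\ref{threeEnt}, then invoke Beigi's interpolation Theorem~\ref{Bei}. What changes is the interpolation curve. In Proposition~\ref{part1}, $\alpha'$, $\beta'$, $\gamma'$ carry a common sign, and the choice $\theta = \alpha'/\gamma'$ places $H_\alpha$ in the interior of a segment with $H_\beta$ and $I_\gamma$ at the endpoints. In Proposition~\ref{part2}, by contrast, $\beta$ lies on the opposite side of $1$ from both $\alpha$ and $\gamma$, so $\beta'$ has the opposite sign from $\alpha'$ and $\gamma'$. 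The natural fix is to swap the roles of $H_\alpha$ and $I_\gamma$: interpolate between the operator norms associated with $H_\beta$ and $H^\downarrow_\alpha$, arranging things so that the norm associated with $I_\gamma$ emerges at an intermediate point $\theta \in (0,1)$.

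Concretely, I would apply Theorem~\ref{Bei} to
\[
F(z) = \left(\sigma_A^{-1}\otimes\tau_C\right)^{z\alpha'/2} X, \qquad p_0 = 2\beta, \qquad p_1 = 2, \qquad \theta = \frac{\gamma'}{\alpha'}.
\]
A short manipulation of the constraint $1/\alpha' = 1/\beta' + 1/\gamma'$ yields $1-\theta = -\gamma'/\beta'$ and $p_\theta = 2\hat\gamma$, so that $\|F(\theta)\|_{p_\theta}$ is precisely the operator-norm expression for $I_\gamma$ from Lemma~\ref{threeEnt}. Because $\alpha'$ and $\gamma'$ share a sign in both cases while $\beta'$ carries the opposite sign, both $\theta$ and $1-\theta$ are positive; peeling absolute values off the constraint then forces $|\gamma'|<|\alpha'|$ and $|\gamma'|<|\beta'|$, so $\theta, 1-\theta \in (0,1)$. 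The admissibility conditions $p_0, p_\theta \geq 1$ follow from $\beta\geq 1/2$ and the stated range of $\gamma$. Since $(\sigma_A^{-1}\otimes\tau_C)^{it}$ is unitary for real $t$, the boundary suprema reduce to $M_0 = \|X\|_{2\beta}$ and $M_1 = \|(\sigma_A^{-1}\otimes\tau_C)^{\alpha'/2} X\|_2$, and Beigi's theorem delivers
\[
\left\|\left(\sigma_A^{-1}\otimes\tau_C\right)^{\gamma'/2} X\right\|_{2\hat\gamma} \leq \|X\|_{2\beta}^{-\gamma'/\beta'} \left\|\left(\sigma_A^{-1}\otimes\tau_C\right)^{\alpha'/2} X\right\|_2^{\gamma'/\alpha'}.
\]

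To close, I would raise both sides to the $2/\gamma'$ power, which flips the inequality in Case 1 (where $\gamma'<0$) and preserves it in Case 2. The exponents simplify via the constraint to $-2/\beta'$ and $2/\alpha'$, so that the resulting operator norms match exactly the expressions from Lemma~\ref{threeEnt} for $H_\beta$ and $H^\downarrow_\alpha$. Setting $\sigma_A = \rho_A$ and taking $\sup_{\tau_C}$ on both sides (which preserves direction) recovers Eqs.~\eqref{p2ugd} and~\eqref{p2uld}, while minimising over $\sigma_A$ instead (also direction-preserving) yields Eqs.~\eqref{p2dgu} and~\eqref{p2dlu}. The main obstacle is bookkeeping: the four sign choices (those of $\alpha'$, $\beta'$, $\gamma'$, and of the power-raising step) have to be tracked carefully to land on the correct inequality in each sub-case, and $p_0, p_1, p_\theta \geq 1$ must be verified over the full parameter ranges, most delicately when $\beta$ is close to $1/2$ in Case 2.
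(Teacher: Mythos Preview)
Your proposal is correct and matches the paper's own proof essentially line for line: the same choice of $F(z) = (\sigma_A^{-1}\otimes\tau_C)^{z\alpha'/2}X$, the same $\theta = \gamma'/\alpha'$, $p_0 = 2\beta$, $p_1 = 2$, $p_\theta = 2\hat\gamma$, the same boundary suprema $M_0, M_1$, and the same case split on the sign of $\gamma'$ followed by the $\sigma_A = \rho_A$ versus $\inf_{\sigma_A}$ step. The only difference is that you verify $\theta\in(0,1)$ and $p_k\geq 1$ directly from the sign pattern, whereas the paper defers these range checks to Lemma~\ref{abylem}.
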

\begin{proof}
Choose $F(z) = (\sigma_A^{-1}\otimes\tau_C)^\frac{z\alpha'}{2}X,\quad
\theta = \frac{\gamma'}{\alpha'},\quad
p_0 = 2\beta,\quad
p_1 = 2$.
We have, as before, $1-\theta = \frac{-\gamma'}{\beta'}$ and through a similar calculation we can conclude that $p_\theta= 2\hat\gamma$.

We have
\begin{align}
&\|F(\theta)\|_{p_\theta} = \left\|(\sigma_A^{-1}\otimes\tau_C)^\frac{\gamma'}{2}X\right\|_{2\hat\gamma}, \quad
\|F(\im t)\|_{p_0} = \left\|(\sigma_A^{-1}\otimes\tau_C)^\frac{\im t\alpha'}{2}X\right\|_{2\beta}\\
&\text{and }\|F(1+\im t)\|_{p_1} = \left\|(\sigma_A^{-1}\otimes\tau_C)^\frac{(1+\im t)\alpha'}{2}X\right\|_{2},
\end{align}
hence $M_0 = \left\|X\right\|_{2\beta}\text{ and } M_1 = \left\|(\sigma_A^{-1}\otimes\tau_C)^\frac{\alpha'}{2}X\right\|_{2}$.
Applying Theorem~\ref{Bei} we have
\begin{align}
\left\|(\sigma_A^{-1}\otimes\tau_C)^\frac{\gamma'}{2}X\right\|_{2\hat\gamma}&\leq\left\|X\right\|_{2\beta}^\frac{-\gamma'}{\beta'}\left\|(\sigma_A^{-1}\otimes\tau_C)^\frac{\alpha'}{2}X\right\|_{2}^\frac{\gamma'}{\alpha'}.
\end{align}

First, we consider the case where $\gamma'>0$. It follows that
\begin{align} \left\|(\sigma_A^{-1}\otimes\tau_C)^\frac{\gamma'}{2}X\right\|_{2\hat\gamma}^\frac{2}{\gamma'}&\leq\left\|X\right\|_{2\beta}^\frac{-2}{\beta'}\left\|(\sigma_A^{-1}\otimes\tau_C)^\frac{\alpha'}{2}X\right\|_{2}^\frac{2}{\alpha'}.
\end{align}

As in Proposition~\ref{part1}, we can maximise over $\tau_C$ and on both sides. Continuing the same procedure by choosing $\sigma_A = \rho_A$ or minimising over $\sigma_A$ we arrive at Eqs.~\eqref{p2uld} and~\eqref{p2dlu}.
Repeating the same process with the assumption $\gamma'<0$ yields Eqs.~\eqref{p2ugd} and~\eqref{p2dgu}. We can again refer to Lemma~\ref{abylem} to determine the valid ranges.
\end{proof}
\begin{proposition}\label{part3}Let $\alpha,\beta,\gamma$ be such that $\frac{1}{\alpha'} = \frac{1}{\beta'}+\frac{1}{\gamma'}$. Then, the following holds.\\
	
	For $\alpha\in(0,1), \beta \in [1/2,1),\gamma\in(1,\infty)$, we find
	\begin{align}
	I^\uparrow_\gamma(A\;;\>\!B)_\rho&\geq H_\beta(B)_\rho-H^\downarrow_\alpha(B|A)_{\rho}\label{p3ugd} \text{ and}\\
	I^\downarrow_\gamma(A:B)_\rho&\geq H_\beta(B)_\rho-H^\uparrow_\alpha(B|A)_{\rho}\label{p3dgu}.
	\end{align}
	
	For $\gamma\in[1/2,1), \beta \in (1,2),\alpha\in(1,\infty)$, we find
	\begin{align}
	I^\uparrow_\gamma(A\;;\>\!B)_\rho&\leq H_\beta(B)_\rho-H^\downarrow_\alpha(B|A)_{\rho}\label{p3uld} \text{ and}\\
	I^\downarrow_\gamma(A:B)_\rho&\leq H_\beta(B)_\rho-H^\uparrow_\alpha(B|A)_{\rho}\label{p3dlu}.
	\end{align}
\end{proposition}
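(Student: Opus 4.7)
The plan is to apply Beigi's interpolation theorem (Theorem~\ref{Bei}) in the same spirit as Propositions~\ref{part1} and~\ref{part2}, but now placing the entropy norm $\|X\|_{2\beta}$ at the interior point of the complex strip, so that the conditional-entropy and mutual-information norms of Lemma~\ref{threeEnt} appear as the two boundary bounds. Concretely, I would choose
\[
F(z) = \bigl(\sigma_A^{-1}\otimes\tau_C\bigr)^{\frac{(1-z)\alpha' + z\gamma'}{2}}X,\qquad p_0 = 2,\qquad p_1 = 2\hat\gamma,
\]
and select $\theta$ so that the exponent of $\sigma_A^{-1}\otimes\tau_C$ vanishes at $z=\theta$, giving $F(\theta)=X$. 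Using the harmonic relation $\frac{1}{\alpha'}=\frac{1}{\beta'}+\frac{1}{\gamma'}$ and the identity $\gamma'\hat\gamma=\gamma'/(1+\gamma')$ recorded in Section~\ref{sec:not}, a routine computation yields $\theta=-\beta'/\gamma'$, $1-\theta=\beta'/\alpha'$, and $p_\theta=2\beta$. The boundary maxima $M_0,M_1$ then coincide with the operators in Eqs.~\eqref{CRE} and~\eqref{MI} after absorbing the unitary factors on the imaginary axis.

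Applying Theorem~\ref{Bei} delivers the H\"older-type bound
\[
\|X\|_{2\beta} \;\leq\; \bigl\|(\sigma_A^{-1}\otimes\tau_C)^{\alpha'/2}X\bigr\|_2^{\beta'/\alpha'}\bigl\|(\sigma_A^{-1}\otimes\tau_C)^{\gamma'/2}X\bigr\|_{2\hat\gamma}^{-\beta'/\gamma'}.
\]
Raising both sides to the power $2/\beta'$---which reverses the inequality exactly when $\beta'<0$---and isolating the conditional-entropy norm on one side produces, for every $\tau_C$, a comparison between $\bigl\|(\sigma_A^{-1}\otimes\tau_C)^{\alpha'/2}X\bigr\|_2^{2/\alpha'}$ and $\|X\|_{2\beta}^{2/\beta'}\bigl\|(\sigma_A^{-1}\otimes\tau_C)^{\gamma'/2}X\bigr\|_{2\hat\gamma}^{2/\gamma'}$. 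Since the scalar prefactor is positive and independent of $\tau_C$, taking the supremum over $\tau_C$ preserves the inequality; then applying $-\log$ and Lemma~\ref{threeEnt} yields $I_\gamma(\rho_{AB}\|\sigma_A)\gtrless H_\beta(B)_\rho-H_\alpha(\rho_{AB}\|\sigma_A)$ for every admissible $\sigma_A$, with direction governed by the sign of $\beta'$. Setting $\sigma_A=\rho_A$ gives Eqs.~\eqref{p3ugd} and~\eqref{p3uld}, and taking the infimum over $\sigma_A$ on both sides---using $\inf_{\sigma_A}[H_\beta(B)_\rho-H_\alpha(\rho_{AB}\|\sigma_A)]=H_\beta(B)_\rho-H^\uparrow_\alpha(B|A)_\rho$---gives Eqs.~\eqref{p3dgu} and~\eqref{p3dlu}.

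The main obstacle is verifying that the parameter tuples declared in each case place $\theta=-\beta'/\gamma'$ inside $[0,1]$ and that the signs of $\alpha'$, $\beta'$, $\gamma'$ match the direction of the asserted inequality. In Case~1, $\alpha\in(0,1)$ forces $\alpha'<0$, $\beta\in[1/2,1)$ gives $\beta'\in[-1,0)$, and $\gamma\in(1,\infty)$ gives $\gamma'\in(0,1)$, so both $\theta$ and $1-\theta$ are non-negative; the harmonic identity further forces $|\beta'|<\gamma'$, placing $\theta$ strictly inside $(0,1)$. Case~2 is handled symmetrically with the roles of $\beta'$ and $\gamma'$ swapped. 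These bookkeeping checks are precisely what Lemma~\ref{abylem} already packages up for Propositions~\ref{part1} and~\ref{part2}, so I would simply invoke it rather than redo the sign analysis.
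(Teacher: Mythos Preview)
Your proposal is correct and follows essentially the same approach as the paper's proof. The only difference is cosmetic: the paper parametrises the holomorphic family as $F(z)=(\sigma_A^{-1}\otimes\tau_C)^{\frac{\gamma'}{2}-z\frac{\gamma'\alpha'}{2\beta'}}X$ with $p_0=2\hat\gamma$, $p_1=2$, and $\theta=\beta'/\alpha'$, whereas you take the affine exponent running from $\alpha'/2$ to $\gamma'/2$ with $p_0=2$, $p_1=2\hat\gamma$, and $\theta=-\beta'/\gamma'$; these are related by the substitution $z\mapsto 1-z$ and yield the identical interpolation inequality and the same sign analysis via Lemma~\ref{abylem}.
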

\begin{proof}
Choose $F(z) = (\sigma_A^{-1}\otimes\tau_C)^{\frac{\gamma'}{2}-z\frac{\gamma'\alpha'}{2\beta'}}X,\quad
\theta = \frac{\beta'}{\alpha'},\quad
p_0 = 2\hat\gamma,\quad
p_1 = 2$.
As above, $1-\theta = \frac{-\beta'}{\gamma'}$ and $p_\theta= 2\beta$.

We have
\begin{align}
&\|F(\theta)\|_{p_\theta} = \left\|X\right\|_{2\beta},\quad
\|F(\im t)\|_{p_0} = \left\|(\sigma_A^{-1}\otimes\tau_C)^{\frac{\gamma'}{2}-\frac{\im t\gamma'\alpha'}{2\beta'}}X\right\|_{2\hat\gamma}\\
&\text{and }\|F(1+\im t)\|_{p_1} = \left\|(\sigma_A^{-1}\otimes\tau_C)^{\frac{\alpha'}{2}-\frac{\im t\gamma'\alpha'}{2\beta'}}X\right\|_{2},
\end{align}
hence $M_0 = \left\|(\sigma_A^{-1}\otimes\tau_C)^{\frac{\gamma'}{2}}X\right\|_{2\hat\gamma}\text{ and } M_1 = \left\|(\sigma_A^{-1}\otimes\tau_C)^\frac{\alpha'}{2}X\right\|_{2}$.
Applying Theorem~\ref{Bei} and performing the same procedure as in Propositions~\ref{part1} and~\ref{part2}, for both $\beta'>0$ and $\beta'<0$ we obtain Eqs.~\eqref{p3ugd},~\eqref{p3dgu},~\eqref{p3uld} and~\eqref{p3dlu}. For the valid ranges, we have a similar situation as in Proposition~\ref{part2} but with symmetry in $\beta$ and $\gamma$.
\end{proof}
We may now prove Theorem~\ref{result1}:
\begin{proof}[Proof of Theorem~\ref{result1}]
	All that remains is to combine the three propositions and examine the valid ranges.
	We have from Lemma~\ref{abylem} that the three propositions cover all possible permutations of the parameters, and hence all valid values of $\alpha, \beta$ and $\gamma$.
	
	For the forward inequality, i.e. Eqs.~\eqref{p1ugd},~\eqref{p1dgu},~\eqref{p2ugd},~\eqref{p2dgu},~\eqref{p3ugd} and~\eqref{p3dgu} we can see that either ($\alpha, \beta, \gamma > 1$), ($\alpha, \gamma<1, \beta > 1$) or ($\alpha, \beta <1, \gamma>1$), which all satisfy $(\alpha-1)(\beta-1)(\gamma-1)>0$.
	
	For the reverse inequality, i.e. Eqs.~\eqref{p1uld},~\eqref{p1dlu},~\eqref{p2uld},~\eqref{p2dlu},~\eqref{p3uld} and~\eqref{p3dlu} we have either (${\alpha, \beta, \gamma < 1}$), ($\alpha, \gamma >1, \beta > 1$) or ($\alpha, \beta>1, \gamma<1$), which all satisfy $(\alpha-1)(\beta-1)(\gamma-1)<0$.
\end{proof}

\subsection{Decomposition rule in terms of the joint entropy}

We now include the proof of Corollary~\ref{noncond}, showing that we may also establish a somewhat weaker inequality that does not involve the conditional entropy and generalises the alternative form of the quantum mutual information decomposition rule. Note that this alternative form is equivalent for $\alpha = 1$ but this equivalence does not extend to general R\'enyi order.
\begin{proof}[Proof of Corollary~\ref{noncond}]
	From Theorem 1 in~\cite{Dup} we have for $\frac{1}{\alpha'} = \frac{1}{\beta'}+\frac{1}{\gamma'}$ that
	\begin{align}
	H^\uparrow_\beta(A|B)_\rho\leq H_\alpha(AB)_\rho - H_\gamma(B)_\rho,\label{cr1}
	\end{align}
	if $(\alpha-1)(\beta-1)(\gamma-1)>0$, and
	\begin{align}
	H^\downarrow_\beta(B|A)_\rho\geq H_\alpha(AB)_\rho - H_\gamma(A)_\rho\label{cr2},
	\end{align}
	if $(\alpha-1)(\beta-1)(\gamma-1)<0$.\\
	
	We begin with Eq.~\eqref{res1-2} then substitute in Eq.~\eqref{cr1} with valid parameters
	\begin{align}
	I^\downarrow_\gamma(A:B)_\rho&\geq H_\alpha(A)_\rho-H^\uparrow_{\alpha_1}(A|B)_{\rho}\\
	&\geq H_\alpha(A)_\rho+ H_\beta(B)_\rho - H_\delta(AB)_\rho,
	\end{align}
	where $\frac{1}{\delta'} - \frac{1}{\beta'}= \frac{1}{\alpha'} + \frac{1}{\gamma'}$.
	We know from Corollary~\ref{shyeah} that both
	\begin{align}
	(\alpha-1)(\alpha_1-1)(\gamma-1)>0\quad\text{then}\quad&\alpha_1<\alpha, \gamma \quad\text{and}\\
	(\beta-1)(\alpha_1-1)(\delta-1)>0\quad\text{then}\quad&\delta<\alpha_1, \beta.
	\end{align}
	Similarly, if we begin with Eqs.~\eqref{res1-3} and~substitute in Eq.~\eqref{cr2} we arrive at Eq.~\eqref{reverse} but with
	\begin{align}
	(\alpha-1)(\alpha_1-1)(\gamma-1)<0\quad\text{then}\quad&\alpha_1>\alpha, \gamma,\quad\text{and}\\
	(\beta-1)(\alpha_1-1)(\delta-1)<0\quad\text{then}\quad&\delta>\alpha_1, \beta.
	\end{align}
\end{proof}

\section{Proof of Theorem~\ref{result2}}
\label{sec:proof2}

Hall's result~\cite{H95} follows from an extension of the Maassen-Uffink relation \citep{MU}, found in~\cite{coles17}:
\begin{equation}
H(X|Y) + H(Z|Y) \geq -\log c.
\end{equation}
Substituting the Shannon mutual information decomposition rule \citep[Chap.~11]{NC}, rearranging the inequality and using the fact that $H(X)\leq \log|X|= \log d$ to maximise over the non-conditional entropies yields the relation. We will follow a similar approach.

We first show a generalisation of a bipartite quantum R\'enyi uncertainty relation found in~\cite[Eq.~(7.24)]{Fin}. One of the quantum R\'enyi decomposition rules from Theorem~\ref{result1} is then applied to derive a quantum R\'enyi information exclusion relation.
\subsection{A generalised bipartite quantum uncertainty relation}
We first establish a Maassen-Uffink type bipartite uncertainty relation expressed in terms of the generalised R\'enyi conditional entropy Eq.~\eqref{GCE}.
\begin{lemma}\label{GBUR}
Let $\Map_X\in\CPTP(A,X)$ and $\Map_Z\in \CPTP(A,Z)$ be two incompatible measurement maps, defined by the orthonormal basis $\{\ket{e_x}\}_x$ of $X$ such that
$\Map_X(\rho) = \sum_x\bra{e_x}\rho\ket{e_x}\ketbra{e_x}{e_x}$ and similarly for $\Map_Z$, $\{\ket{f_z}\}_z$ and $Z$.

For $\alpha, \beta, \gamma\geq 1/2$ such that $\frac{\alpha}{\alpha-1} = \frac{\beta}{1-\beta}+\frac{\gamma}{\gamma-1}$ and $(\alpha-1)(\beta-1)(\gamma-1) < 0$,
\begin{equation}\label{MUlike}
H_\beta(\Map_X(\rho_{AB})\|\sigma_B) + H_\gamma(\Map_Z(\rho_{AB})\|\sigma_B) \geq H_\alpha(\rho_{AB}\|\sigma_B)-\log c.
\end{equation}
\end{lemma}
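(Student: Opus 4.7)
To prove Lemma~\ref{GBUR} I would adapt the Riesz-Thorin interpolation strategy of Section~\ref{sec:proof1} to accommodate both measurement maps simultaneously, with the overlap bound $|\braket{e_x}{f_z}|^2 \leq c$ entering as a pointwise bound on the coefficients relating the two bases.

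First, I would fix a purification $\ket{\varphi}_{ABR}$ of $\rho_{AB}$, set $Y = \Op_{A\to BR}(\ket{\varphi})$, and define $V_x = \Op_{R\to B}(Y\ket{e_x}_A)$ and $W_z = \Op_{R\to B}(Y\ket{f_z}_A)$. These are related by the change of basis $V_x = \sum_z \braket{e_x}{f_z}\,W_z$, whose coefficients will ultimately supply the $\log c$ term. An argument analogous to the proof of Lemma~\ref{threeEnt}, combined with the block-diagonal structure of $\Map_X(\rho_{AB})$ in the $\{\ket{e_x}\}$ basis, yields
\[
H_\beta(\Map_X(\rho_{AB})\|\sigma_B) = -\frac{2}{\beta'}\log\Bigl\|\sum_x \ketbra{e_x}{e_x}_X \otimes \sigma_B^{-\beta'/2}V_x\Bigr\|_{2\beta},
\]
together with the analogous identity for $H_\gamma(\Map_Z(\rho_{AB})\|\sigma_B)$ (obtained by replacing $V_x$ with $W_z$ and $\{\ket{e_x}\}$ with $\{\ket{f_z}\}$), and the unmeasured identity
\[
H_\alpha(\rho_{AB}\|\sigma_B) = -\frac{2}{\alpha'}\log\bigl\|(\sigma_B^{-\alpha'/2}\otimes\id_R)Y\bigr\|_{2\alpha}.
\]

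Next, I would construct an operator-valued function $F:S\to\Lin$, holomorphic on the strip $S$ and built from $Y$, $\sigma_B$ and the overlap coefficients $\braket{e_x}{f_z}$, in the spirit of the choices used in Propositions~\ref{part1}--\ref{part3}. The design target is that $\|F(\im t)\|_{p_0}$ reproduces the norm appearing in $H_\beta(\Map_X(\rho_{AB})\|\sigma_B)$, $\|F(1+\im t)\|_{p_1}$ the one in $H_\gamma(\Map_Z(\rho_{AB})\|\sigma_B)$, and $\|F(\theta)\|_{p_\theta}$ the norm in $H_\alpha(\rho_{AB}\|\sigma_B)$ \emph{multiplied by} a product of the overlap coefficients $\braket{e_x}{f_z}$. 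The hypothesis $\alpha/(\alpha-1) = \beta/(1-\beta) + \gamma/(\gamma-1)$ is equivalent to $1/\gamma' = 1/\alpha' + 1/\beta'$, uniquely fixing $\theta$ and guaranteeing $1/p_\theta = (1-\theta)/p_0 + \theta/p_1$.

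Applying Theorem~\ref{Bei} yields $\|F(\theta)\|_{p_\theta} \leq M_0^{1-\theta}M_1^\theta$. Taking logarithms, dividing by the signed exponents --- where the hypothesis $(\alpha-1)(\beta-1)(\gamma-1)<0$ ensures the inequality points in the correct direction for the entropic statement --- and then bounding the overlap coefficients uniformly by $|\braket{e_x}{f_z}| \leq \sqrt{c}$ to extract the $-\log c$ contribution, delivers~\eqref{MUlike}. The main obstacle is the second step: pinning down the precise form of $F(z)$ so that all three target norms are reproduced simultaneously and so that the overlap factors accumulate with exactly the right multiplicity to yield a single power of $c$ (rather than a higher or lower one). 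This parallels the case-splitting in Propositions~\ref{part1}--\ref{part3}; once $F$ is correctly identified, the remainder is a mechanical application of Beigi's interpolation, mirroring Section~\ref{sec:proof1}.
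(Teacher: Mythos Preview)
Your proposal takes a different route from the paper, and the obstacle you flag in your last paragraph is the genuine gap. The paper does \emph{not} build a single interpolation encompassing both measurements; it decouples the two ingredients. First, it introduces the Stinespring dilation $\mathcal{S}_Z \in \CPTP(A,ZZ')$ of $\Map_Z$ and, following the argument of \cite[Theorem~7.6]{Fin}, uses only data processing for $D_\beta$ together with the operator inequality coming from $|\braket{e_x}{f_z}|^2 \leq c$ to obtain, for a \emph{single} order,
\[
H_\beta(\Map_X(\rho_{AB})\|\sigma_B) \geq H_\beta(\mathcal{S}_Z(\rho_{AB})\|\sigma_{Z'B}) - \log c .
\]
This is where $c$ enters, with no interpolation at all. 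Second, it applies Dupuis's chain rule \cite[Theorem~1]{Dup} to the state $\mathcal{S}_Z(\rho_{AB})$ on $ZZ'B$, splitting the right-hand side as $H_\alpha(\mathcal{S}_Z(\rho_{AB})\|\sigma_B) - H_\gamma(\tr_Z\mathcal{S}_Z(\rho_{AB})\|\sigma_B)$; since $\mathcal{S}_Z$ is an isometry and $\tr_Z\circ\mathcal{S}_Z = \Map_Z$ (after relabelling $Z'\to Z$), these terms become $H_\alpha(\rho_{AB}\|\sigma_B)$ and $H_\gamma(\Map_Z(\rho_{AB})\|\sigma_B)$, giving the lemma.

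Each step invokes an off-the-shelf result; nothing new has to be constructed. Your plan, by contrast, requires a single holomorphic $F$ whose boundary values are block-diagonal in two \emph{different} orthonormal bases while an interior value reproduces the unmeasured operator. Those block structures do not deform into one another via powers of a fixed positive operator the way the functions in Propositions~\ref{part1}--\ref{part3} do, and you have not exhibited such an $F$. That construction is the heart of the argument rather than a mechanical detail, and the paper's two-step route bypasses it entirely.
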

Before we detail the proof of Lemma~\ref{GBUR} we first introduce a specific form of the Stinespring dilation~\cite{Stine}.
\begin{definition}[Stinespring dilation]
$\Map \in \CPTP(A, B)$ if and only if there exists an isometry $U\in \Lin(A, BC)$ such that $\Map(\rho) = \tr_C(U\rho U^\dagger) \text{ for all } \rho\in\Den(A)$.
\end{definition}
\begin{proof}[Proof of Lemma~\ref{GBUR}]
	Let $\mathcal{S}_Z\in \CPTP(A, ZZ')$ be the Stinespring dilation of $\Map_Z$ such that
	\begin{equation}
	\mathcal{S}_Z(\rho_A) = \sum_{z, z'}\bra{f_z}\rho_A\ket{f_{z'}}\ketbra{f_{z}}{f_{z'}}\otimes\ketbra{f_{z}}{f_{z'}}.
	\end{equation}
	We use the same argument as the proof of~\cite[Theorem~7.6]{Fin}, but without maximising over $\sigma_{Z'B}$, to arrive at
	\begin{equation}
	H_\alpha(\Map_X(\rho_{AB})\|\sigma_B)\geq H_\alpha(\mathcal{S}_Z(\rho_{AB})\|\sigma_{Z'B}) - \log c.\label{initineq}
	\end{equation}
	The two main components of this argument are the comparisons:
	\begin{align}
	\label{comp1}H_\alpha\left(\mathcal{S}_Z(\rho_{AB})\|\sigma_{Z'B}\right)&\leq -D_\alpha\left(\Map_X(\rho_{AB})\|\Map_X\left(\mathcal{S}_Z(\id_{Z}\otimes\sigma_{Z'B})\right)\right)\quad\text{and}\\
	\label{comp2}\Map_X\left(\mathcal{S}_Z(\id_{Z}\otimes\sigma_{Z'B})\right) &= \sum_{x,z}\left|\braket{e_x}{f_z}\right|^2\ketbra{e_x}{e_x}\otimes\bra{f_z}\sigma_{Z'B}\ket{f_z}\leq c\id_X\otimes\sigma_Z.
	\end{align}
	Substituting Eq.~\eqref{comp2} into Eq.~\eqref{comp1} yields Eq.~\eqref{initineq}.\\
	
	Let $\rho, \sigma\in \Den(ABC)$ be pure states and $\alpha, \beta, \gamma \geq 1/2$ such that ${\frac{\alpha}{\alpha-1} = \frac{\beta}{1-\beta}+\frac{\gamma}{\gamma-1}}$ and ${(\alpha-1)}{(\beta-1)}(\gamma-1) < 0$. Then by Theorem 1 in~\cite{Dup} we can write
	\begin{equation}
	H_\beta(\rho\|\sigma_{BC}) \geq H_\alpha(\rho\|\sigma_{C}) - H_\gamma(\rho_{BC}\|\sigma_C).\label{genchain}
	\end{equation}
	Substituting Eq.~\eqref{genchain} into Eq.~\eqref{initineq} we have
	\begin{equation}
	H_\beta(\Map_X(\rho_{AB})\|\sigma_B)\geq H_\alpha(\mathcal{S}_Z(\rho_{AB})\|\sigma_{B}) - H_\gamma(\tr_Z(\mathcal{S}_Z(\rho_{AB}))\|\sigma_B)- \log c.
	\end{equation}
	Using the fact that the marginals on $ZB$ and $ZB'$ of the state $\mathcal{S}_Z(\rho_{AB})$ are equivalent  and that the conditional entropies are invariant under local isometries we obtain Eq.~\eqref{MUlike}.
\end{proof}
\subsection{Applying the decomposition rule}
\begin{proof}[Proof of Theorem~\ref{result2}]
Starting with Eq.~\eqref{MUlike}, choosing parameters which satisfy the conditions and setting $\sigma = \rho$, we can write
\begin{equation}
H^\downarrow_{\bar\beta}(X|B)_\rho + H^\downarrow_{\bar\gamma}(Z|B)_\rho \geq -\log c + H^\downarrow_\alpha(A|B)_\rho.
\end{equation}
For each conditional entropy on the left-hand side we can derive the following inequalities from Eq.~\eqref{res1-3}:
\begin{align}
H^\downarrow_{\bar\beta}(X|B)_{\rho}&\leq H_{\tilde{\beta}}(X)_\rho-I^\uparrow_\beta(B\;;\>\!X)_\rho,\\
H^\downarrow_{\bar\gamma}(Z|B)_{\rho}&\leq H_{\tilde\gamma}(Z)_\rho-I^\uparrow_\gamma(B\;;\>\!Z)_\rho.
\end{align}
We can then write
\begin{align}
H_{\tilde{\beta}}(X)_\rho-I^\uparrow_\beta(B\;;\>\!X)_\rho + H_{\tilde\gamma}(Z)_\rho-I^\uparrow_\gamma(B\;;\>\!Z)_\rho &\geq -\log c + H^\downarrow_\alpha(A|B)_\rho\\
\implies I^\uparrow_\beta(B\;;\>\!X)_\rho + I^\uparrow_\gamma(B\;;\>\!Z)_\rho &\leq H_{\tilde{\beta}}(X)_\rho+H_{\tilde\gamma}(Z)_\rho + \log c -H^\downarrow_\alpha(A|B)_\rho\\
&\leq \log(d^2c) - H^\downarrow_\alpha(A|B)_\rho.
\end{align}
The last line is due to $H_{\alpha}(A)_\rho\leq \log d$ for all $\alpha$.

We can optimise the parameters when $\tilde\beta, \tilde\gamma=\frac{1}{2}$, hence
\begin{align}
\frac{\alpha}{\alpha-1} &\leq \frac{\beta}{\beta-1}+\frac{\tilde\beta}{\tilde\beta-1}+\frac{\gamma}{\gamma-1}+\frac{\tilde\gamma}{\tilde\gamma-1}\\
&\leq \frac{\beta}{\beta-1}-1+\frac{\gamma}{\gamma-1}-1\\
&\leq \frac{1}{\beta-1}+\frac{1}{\gamma-1}.
\end{align}
\end{proof}
Choosing $\alpha \rightarrow\infty$ we have the following corollary which summarises the possible choices of parameters which produce an optimal inequality.
\begin{corollary}\label{res2c}
	Given the same conditions as Theorem~\ref{result2}, for $\alpha\geq1/2$, we have
	\begin{equation}
	I^\uparrow_\alpha(B\;;\>\!X)_\rho + I^\uparrow_\frac{2\alpha - 3}{\alpha-2}(B\;;\>\!Z)_\rho \leq \log(d^2c)-H_{\min}(A|B)_\rho.
	\end{equation}
\end{corollary}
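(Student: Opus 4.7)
Corollary~\ref{res2c} should follow from Theorem~\ref{result2} by sending the theorem's parameter $\alpha$ (call it $\alpha_0$) to infinity. The plan is to identify the corollary's $\alpha$ with the theorem's $\beta$, and to determine the theorem's $\gamma$ from the saturating value of the constraint in the $\alpha_0 \to \infty$ limit.

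The first step is to observe that the constraint $\frac{\alpha_0}{\alpha_0 - 1} \leq \frac{1}{\alpha - 1} + \frac{1}{\gamma_0 - 1}$ of Theorem~\ref{result2}, in the limit $\alpha_0 \to \infty$, reduces to $1 \leq \frac{1}{\alpha - 1} + \frac{1}{\gamma_0 - 1}$; this saturates at $\gamma_0 = \frac{2\alpha - 3}{\alpha - 2}$ by direct algebra. For each finite $\alpha_0$, I would pick $\gamma_0 = \gamma_0(\alpha_0)$ saturating the constraint with equality, producing a sequence that monotonically increases to $\gamma := \frac{2\alpha-3}{\alpha-2}$ from below. The sign condition $(\alpha_0 - 1)(\alpha - 1)(\gamma_0 - 1) < 0$ and the bound $\gamma_0 < 4/3$ are both automatic for $\alpha_0$ large enough, because the map $\alpha \mapsto \frac{2\alpha-3}{\alpha-2}$ is a decreasing involution of $[1/2, 4/3]$ swapping the sub-intervals $[1/2, 1)$ and $(1, 4/3]$.

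Applying Theorem~\ref{result2} for each such choice then yields
\begin{equation*}
I^\uparrow_\alpha(B\;;\>\!X)_\rho + I^\uparrow_{\gamma_0(\alpha_0)}(B\;;\>\!Z)_\rho \leq \log(d^2 c) - H^\downarrow_{\alpha_0}(A|B)_\rho.
\end{equation*}
Taking $\alpha_0 \to \infty$, the right-hand side converges to $\log(d^2c) - H_{\min}(A|B)_\rho$ by definition of the conditional min-entropy as the $\alpha_0 \to \infty$ limit of $H^\downarrow_{\alpha_0}$, and the second term on the left converges to $I^\uparrow_\gamma(B\;;\>\!Z)_\rho$ by continuity of the sandwiched R\'enyi divergence in its order parameter.

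The only real obstacle is this limit argument itself: one must justify that the convergence of the R\'enyi quantities in the order parameter is strong enough to preserve the inequality, particularly at the boundary $\alpha = 1/2$ (where $\gamma = 4/3$ sits on the boundary of Theorem~\ref{result2}'s admissible range). Since our approximating sequence stays strictly inside the admissible range for every finite $\alpha_0$, and continuity of the sandwiched divergence in its order is standard in finite dimensions, this should be straightforward to make rigorous.
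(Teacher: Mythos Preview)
Your proposal is correct and follows essentially the same approach as the paper: both take the theorem's order parameter on the conditional entropy to infinity, reduce the constraint to $1 \leq \frac{1}{\alpha-1} + \frac{1}{\gamma-1}$, and identify the saturating value $\gamma = \frac{2\alpha-3}{\alpha-2}$. You are simply more explicit than the paper's brief sketch about the approximating sequence $\gamma_0(\alpha_0)$ and the continuity needed to pass to the limit.
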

\begin{proof}
We know$\displaystyle{\lim_{\eta\rightarrow\infty} \eta/(\eta-1) = 1}$, hence if we take the order parameter on the conditional entropy to $\infty$, we obtain the relationship
\begin{align}
1 &\leq \frac{1}{\alpha-1} + \frac{1}{\beta-1},
\end{align}
from which we can deduce $\beta \leq \frac{\alpha-1}{\alpha-2} +1=\frac{2\alpha - 3}{\alpha-2}$.
\end{proof}
\appendix
\section{Analysis of related R\'enyi orders}\label{res}
The following lemma serves to explore the important relationship between the R\'enyi orders which is motivated by the application of Theorem~\ref{Bei}. We show what ranges result for each possible permutation of the signs of the orders by examining the asymptotic behaviour of this relationship. This then informs the possible choices of $\theta$ in Propositions~\ref{part1},~\ref{part2} and~\ref{part3}.
\begin{lemma}\label{abylem}
If $\alpha>0,\beta,\gamma>1/2$ and are related by
\begin{align}
\label{rel}\frac{\alpha}{\alpha-1} = \frac{\beta}{\beta-1} + \frac{\gamma}{\gamma-1}
\end{align}
and assuming, without loss of generality, that $\beta>\gamma$, then the following are true and cover all possible cases up to symmetry:
\begin{align*}
&\text{If }0<\frac{\alpha'}{\beta'}<1\text{ then either}\\
\numberthis&\qquad\text{\bf Case 1. }\alpha,\beta, \gamma>1,\quad\alpha<\gamma<\beta\quad\text{and}\quad\alpha\in(1, 2),\beta,\gamma\in(1,\infty),\label{case1}\\
&\text{or}\\
\numberthis&\qquad\text{\bf Case 2. }\alpha,\beta, \gamma<1,\quad \gamma<\beta<\alpha\quad\text{and}\quad\alpha\in[2/3, 1),\beta,\gamma\in[1/2,1).\label{case2}\\
&\text{If }0<\frac{\beta'}{\alpha'}<1\text{ then}\\
\numberthis&\qquad\text{\bf Case 3. }\alpha,\beta>1,\gamma<1,\quad\gamma<\beta<\alpha\quad\text{and}\quad\gamma\in[1/2,1), \beta \in (1,2),\alpha\in(1,\infty).\label{case3}\\
&\text{If }0<\frac{\gamma'}{\alpha'}<1\text{ then}\\
\numberthis&\qquad\text{\bf Case 4. }\alpha,\gamma<1,\beta>1,\quad\alpha<\gamma<\beta\quad\text{and}\quad\alpha\in(0,1), \gamma \in [1/2,1),\beta\in(1,\infty).\label{case4}
\end{align*}	
\end{lemma}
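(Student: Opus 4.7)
The cleanest approach is to reformulate the problem via the substitution $u := 1/\alpha' = \alpha/(\alpha-1)$, $v := 1/\beta'$, $w := 1/\gamma'$, under which Eq.~\eqref{rel} becomes the elementary linear relation $u = v + w$. The map $t \mapsto 1/t'$ is strictly decreasing on each of $(0,1)$ and $(1,\infty)$, sending these intervals to $(-\infty, 0)$ and $(1,\infty)$ respectively; consequently the sign of $u$ (and likewise of $v, w$) records whether the corresponding parameter lies below or above $1$, and the hypothesis $\beta, \gamma \geq 1/2$ translates to $v, w \in (-\infty, -1] \cup (1,\infty)$. The three ratios appearing in the hypotheses rewrite as $\alpha'/\beta' = v/u$, $\beta'/\alpha' = u/v$ and $\gamma'/\alpha' = u/w$.

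First I would case-split on the sign pattern of $(u, v, w)$. The equation $u = v + w$ rules out several patterns directly, and the WLOG assumption $\beta > \gamma$ (equivalent to $|v| < |w|$ when $v, w$ share a sign, and fixing which of $v, w$ is negative in the mixed cases) selects between the remaining symmetric pairs. This leaves precisely four configurations: $(+, +, +)$, $(-, -, -)$, $(+, +, -)$ and $(-, +, -)$, which I claim correspond to Cases 1--4 respectively.

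Next, in each surviving sign configuration I would read off the ordering of $u, v, w$ directly from $u = v + w$ and translate it into an ordering of $\alpha, \beta, \gamma$ via the decreasing monotonicity of $t \mapsto 1/t'$. In the same-sign cases $u$ has the largest absolute value, which gives $\alpha < \gamma < \beta$ in Case 1 and $\gamma < \beta < \alpha$ in Case 2. In the mixed cases the singleton-sign variable $v$ has the largest absolute value, yielding the orderings required in Cases 3 and 4. The sharp numerical ranges for $\alpha$ or $\beta$ follow from substituting $|v|, |w| \geq 1$ into $u = v + w$ or $v = u - w$ and applying monotonicity: e.g.\ in Case 1, $u = v + w > 2$ forces $\alpha < 2$; in Case 2, $u = v + w \leq -2$ forces $\alpha \geq 2/3$; in Case 3, $v = u + |w| > 1 + 1 = 2$ forces $\beta < 2$.

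The final step is to match the four configurations to the three ratio conditions in the lemma. One verifies that $0 < v/u < 1$ is satisfied exactly in the two same-sign cases, with the sign of $u$ discriminating Cases 1 and 2, while $0 < u/v < 1$ isolates Case 3 and $0 < u/w < 1$ isolates Case 4. The principal bookkeeping obstacle throughout is keeping the sign conventions consistent across the pole of $1/t'$ at $t = 1$, and in particular being careful that $t \mapsto 1/t'$ is \emph{decreasing} (not increasing) on each branch; once this translation is fixed, every remaining step reduces to arithmetic of the linear relation $u = v + w$ under the prescribed sign and magnitude constraints.
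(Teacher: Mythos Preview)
Your approach via the substitution $u = 1/\alpha'$, $v = 1/\beta'$, $w = 1/\gamma'$ is essentially the paper's own case analysis on sign patterns, repackaged so that the defining relation becomes the linear identity $u = v + w$; this is a genuine simplification over the paper's use of limits, but the logical skeleton (eliminate impossible sign patterns, then read off orderings and ranges) is the same.

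Two slips in your exposition should be corrected. First, when $v$ and $w$ are both negative the hypothesis $\beta > \gamma$ translates to $|v| > |w|$, not $|v| < |w|$ (e.g.\ $\beta = 0.8$ gives $v = -4$ while $\gamma = 0.6$ gives $w = -1.5$); following your stated inequality literally would yield $\alpha > \gamma > \beta$ in Case~2, contradicting the WLOG assumption. Second, in the mixed-sign cases the variable of largest modulus is \emph{not} the one of singleton sign: in Case~3 one has $v = u + |w|$, so $v$ (sharing sign with $u$) dominates, while in Case~4 one has $|w| = |u| + v$, so $w$ (again sharing sign with $u$) dominates. Your stated final orderings $\gamma < \beta < \alpha$ and $\alpha < \gamma < \beta$ are nevertheless correct, so these are errors of description rather than of method; once they are fixed the argument goes through as written.
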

\begin{proof}
First we will investigate the possible cases or, more specifically, the cases missing from the lemma.
Given three independent binary options there are 8 possible permutations. Of the four that are missing the following:
$(\alpha, \gamma>1, \beta<1)$ and 
$(\alpha, \beta <1, \gamma>1)$, contradict the assumption that $\beta>\gamma$.
The remaining two:
$(\alpha>1, \beta,\gamma<1)$ and 
$(\alpha<1, \beta, \gamma>1)$,
never satisfy Eq.~\eqref{rel}. We can now explore the implications of each of the assumptions.

Consider $0<\frac{\alpha'}{\beta'}<1$. It is evident that $(\alpha - 1)(\beta - 1)>0$, a condition which now excludes \textbf{Case 4}. However, we can examine the two situations where this condition is satisfied:
\begin{align}
0<\frac{\alpha'}{\beta'}<1\implies \begin{cases}
\alpha<\beta \quad \text{if}\quad \alpha, \beta > 1\\
\alpha>\beta \quad \text{if}\quad \alpha, \beta < 1.
\end{cases}
\end{align}
It is clear that \textbf{Case 3} does not satisfy these implications but that \textbf{Cases 1}  and \textbf{2} do depending on the sign of $\alpha - 1$.

For \textbf{Case 1}, we can calculate that
$\displaystyle{\lim_{\eta\rightarrow 1^{+}} \frac{\eta}{\eta-1} =\infty}$ and
$\displaystyle{\lim_{\eta\rightarrow \infty} \frac{\eta}{\eta-1} = 1}$.

Since $\alpha, \beta$ and $\gamma$ are related by Eq.~\eqref{rel}, we have
\begin{align}
\alpha &\longrightarrow 1 \implies \beta,\gamma\longrightarrow 1\quad \text{and}\\
\alpha &\longrightarrow 2 \implies \beta,\gamma\longrightarrow \infty,
\end{align}
i.e. $1<\alpha<2$ and $1<\beta,\gamma<\infty$.

Moreover, for \textbf{Case 2}, another simple calculation shows that
$\displaystyle{\max_{1/2\leq\eta<1}\frac{\eta}{\eta-1} = -1}$\\
and $\displaystyle{\lim_{\eta\rightarrow 1^{-}} \frac{\eta}{\eta-1} =-\infty}$. Hence, $\alpha = \frac{2}{3} \implies \beta,\gamma = \frac{1}{2}$, i.e. $\frac{2}{3}\leq\alpha<1$ and $\frac{1}{2}\leq \beta,\gamma<1$.

If instead $0<\frac{\beta'}{\alpha'}<1$, we still have the condition $(\alpha - 1)(\beta - 1)>0$ but in the second part of the argument the inequalities are reversed, i.e
\begin{align}
0<\frac{\beta'}{\alpha'}<1\implies \begin{cases}
\alpha>\beta \quad \text{if}\quad \alpha, \beta > 1\\
\alpha<\beta \quad \text{if}\quad \alpha, \beta < 1.
\end{cases}
\end{align}
This overall excludes \textbf{Cases 1}, \textbf{2}  and \textbf{4} but satisfies \textbf{Case 3}.

In this situation we again have $\alpha\rightarrow 1 \implies \beta, \gamma\rightarrow 1$ and for fixed $\gamma$ we can write $\displaystyle{\lim_{\alpha\rightarrow\infty}\beta = \frac{1}{\gamma}}$. Given that $\gamma>1/2$, this implies $1<\beta<2$.

Lastly, we have $0<\frac{\gamma'}{\alpha'}<1$, which implies $(\alpha - 1)(\gamma-1)>0$, excluding \textbf{Case 3} Similarly, we have following situations:
\begin{align}
0<\frac{\gamma'}{\alpha'}<1\implies \begin{cases}
\alpha>\gamma \quad \text{if}\quad \alpha, \gamma > 1\\
\alpha<\gamma \quad \text{if}\quad \alpha, \gamma < 1,
\end{cases}
\end{align}
which exclude \textbf{Cases 1} and \textbf{2}. So \textbf{Case 4} is the only remaining case which is satisfied.

We again have $\alpha\rightarrow 1 \implies \beta, \gamma\rightarrow 1$ and for fixed $\gamma$, $\displaystyle{\lim_{\alpha\rightarrow 0} \beta = \frac{\gamma}{2\gamma-1}}$ and $\displaystyle{\lim_{\gamma\rightarrow \frac{1}{2}^+}\frac{\gamma}{2\gamma-1} = \infty}$. Hence $1<\beta<\infty.$
\end{proof}
We conclude with the following useful corollary
\begin{corollary}\label{shyeah}
Given the assumptions in Lemma~\ref{abylem}  we have that 
\begin{align}
\alpha<\gamma<\beta &\implies (\alpha-1)(\beta-1)(\gamma-1)>0\quad\text{and}\\
\gamma<\beta<\alpha&\implies(\alpha-1)(\beta-1)(\gamma-1)<0.
\end{align}
\end{corollary}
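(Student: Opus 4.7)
The plan is to read the corollary directly off the case-exhaustion already carried out in Lemma~\ref{abylem}. Under the identical hypotheses, that lemma pins down exactly four sign patterns for the triple $(\alpha-1,\beta-1,\gamma-1)$ compatible with the constraint $\tfrac{\alpha}{\alpha-1}=\tfrac{\beta}{\beta-1}+\tfrac{\gamma}{\gamma-1}$ (assuming $\beta>\gamma$), and couples each pattern to one of the two possible orderings $\alpha<\gamma<\beta$ or $\gamma<\beta<\alpha$. So the whole task reduces to tabulating the sign of $(\alpha-1)(\beta-1)(\gamma-1)$ in each case.

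First I would handle the ordering $\alpha<\gamma<\beta$. Per Lemma~\ref{abylem} this arises only in \textbf{Case~1} and \textbf{Case~4}. In Case~1 all three parameters lie above $1$, so the product is a product of three positive numbers, hence positive. In Case~4 we have $\alpha,\gamma<1$ and $\beta>1$, giving two negative factors and one positive factor, again positive. Together these yield the first implication.

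Next I would treat the ordering $\gamma<\beta<\alpha$, which per Lemma~\ref{abylem} corresponds to \textbf{Case~2} and \textbf{Case~3}. In Case~2 all three factors are negative, so the product is negative. In Case~3 the factors $\alpha-1$ and $\beta-1$ are positive while $\gamma-1$ is negative, once more yielding a negative product. This delivers the second implication.

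Since the essential combinatorial work—excluding the other sign patterns and the ordering $\gamma<\alpha<\beta$—has already been discharged in the proof of Lemma~\ref{abylem}, no genuine obstacle remains. The only real risk is an overlooked case, which the four-way partition in the lemma forestalls; the corollary therefore amounts to sign bookkeeping over the admissible configurations.
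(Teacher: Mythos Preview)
Your proposal is correct and follows exactly the approach the paper takes: the paper's proof consists of the single sentence ``This is evident from examining each case of Lemma~\ref{abylem},'' and you have simply carried out that examination explicitly by tabulating the sign of $(\alpha-1)(\beta-1)(\gamma-1)$ in Cases~1--4 and matching each to its ordering.
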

\begin{proof}
This is evident from examining each case of Lemma~\ref{abylem}.
\end{proof}

\bibliography{LRbib,libraryMT}
\end{document}